\documentclass[12pt]{article}
\usepackage{graphicx} % Required for inserting images
\usepackage{hyperref}
\usepackage{url}
\usepackage{listings,jvlisting}
\usepackage{amssymb}
\usepackage{tabularx}
\usepackage{amsmath}
\usepackage{mdframed}
\usepackage{graphicx}
\usepackage{setspace}
\usepackage{here}
\usepackage{authblk}
\usepackage{mathtools}
\usepackage{amsthm}
\usepackage{cite}
\usepackage{lscape}
\usepackage{xr}
\usepackage[authoryear]{natbib}
\usepackage{booktabs}
\usepackage{longtable}

\newtheorem{theorem}{Theorem}
\newtheorem{definition}{Definition}

\newtheorem{lemma}{Lemma}
\newtheorem{proposition}{Proposition}

\makeatletter

\newcommand*{\addFileDependency}[1]{% argument=file name and extension
\typeout{(#1)}% latexmk will find this if $recorder=0
\@addtofilelist{#1}
\IfFileExists{#1}{}{\typeout{No file #1.}}
}\makeatother

\def\bbe{{\text{\boldmath $\beta$}}}

\def\bth{{\text{\boldmath $\theta$}}}

\def\b{{\text{\boldmath $b$}}}

\def\e{{\text{\boldmath $e$}}}

\def\v{{\text{\boldmath $v$}}}

\def\x{{\text{\boldmath $x$}}}
\def\y{{\text{\boldmath $y$}}}

\def\A{{\text{\boldmath $A$}}}
\def\B{{\text{\boldmath $B$}}}

\def\D{{\text{\boldmath $D$}}}

\def\H{{\text{\boldmath $H$}}}

\def\P{{\text{\boldmath $P$}}}
\def\Q{{\text{\boldmath $Q$}}}

\def\S{{\text{\boldmath $S$}}}
\def\T{{\text{\boldmath $T$}}}
\def\U{{\text{\boldmath $U$}}}
\def\V{{\text{\boldmath $V$}}}
\def\W{{\text{\boldmath $W$}}}
\def\X{{\text{\boldmath $X$}}}
\def\Y{{\text{\boldmath $Y$}}}
\def\Z{{\text{\boldmath $Z$}}}

\begin{document}
\title{Differentially Private One-Shot Federated Inference for Linear Mixed Models via Lossless Likelihood Reconstruction}
\author[1]{Keisuke Hanada\footnote{Department of Biostatistics, Wakayama Medical University, Kimiidera, Wakayama, 641-8509, Japan \quad E-Mail: khanada@wakayama-med.ac.jp}}
\author[1]{Toshio Shimokawa}
\author[2]{Kazushi Maruo}
\affil[1]{Department of Biostatistics, Faculty of Medicine, Wakayama Medical University}
\affil[2]{Department of Biostatistics, Institute of Medicine, University of Tsukuba}
\date{}

\maketitle
\abstract{\noindent
One-shot federated learning enables multi-site inference with minimal communication. However, sharing summary statistics can still leak sensitive individual-level information when sites have only a small number of patients. In particular, shared cross-product summaries can reveal patient-level covariate patterns under discrete covariates.
Motivated by this concern, this study proposes a differentially private one-shot federated inference framework for linear mixed models with a random-intercept working covariance. The method reconstructs the pooled likelihood from site-level summary statistics and applies a Gaussian mechanism to perturb these summaries, ensuring a site-level differential privacy. Cluster-robust variance estimators are developed that are computed directly from the privatized summaries. Robust variance provides valid uncertainty quantification even under covariance mis-specification.
Under a multi-site asymptotic regime, the consistency and asymptotic normality of the proposed estimator are established and the leading-order statistical cost of privacy is characterized. 
Simulation studies show that moderate privacy noise substantially reduces reconstruction risk while maintaining competitive estimation accuracy as the number of sites increases. However, very strong privacy settings can lead to unstable standard errors when the number of sites is limited. An application using multi-site COVID-19 testing data demonstrates that meaningful privacy protection can be achieved with a modest loss of efficiency.
\par\vspace{4mm}
{\it Keywords: } cluster-robust variance; differential privacy; linear mixed models; one-shot federated learning.
}

\section{Introduction}

Protecting patient-level information while enabling valid statistical inference across multiple sites is a central challenge in contemporary biomedical research.
A site with only three patients and three binary covariates can be uniquely reconstructed from its Gram matrix $X^\top X$. 
Without constraints, infinitely many matrices share the same quadratic form.
However, discrete covariate restrictions collapse the feasible set, making individual-level data identifiable up to row permutations.
This simple observation reveals a fundamental tension in one-shot federated learning (FL): summary statistics may preserve likelihood information; however, they still expose sensitive individual data when within-site sample sizes are small.
For example, consider a site with three patients and three binary covariates (Result, Sex, and Drive-through testing) from the COVID-19 testing dataset \citep{higgins2021medicaldata}.
Suppose that the site shares only the Gram matrix:
\[
X^\top X =
\begin{pmatrix}
1 & 0 & 0 \\
0 & 1 & 0 \\
0 & 0 & 0
\end{pmatrix}.
\]
From the diagonal entries, it can immediately be infered that exactly one patient tested positive, one patient was male, and no patient underwent drive-through testing. 
Under binary constraints, solving a $0$--$1$ feasibility problem \citep{fischetti2005feasibility,berthold2019ten} reveals that this Gram matrix admits a unique covariate configuration up to the permutation (see Appendix~A for the formulation and verification). 
Thus, even though individual-level records are not directly shared, sensitive attributes may become recoverable when within-site sample sizes are small.

FL has emerged as an important paradigm for analyzing multi-site data without transferring individual-level records \citep{mcmahan2017communication, kairouz2021advances}. 
In the medical domain, FL has been increasingly adopted to enable collaborative analysis across hospitals while preserving patient confidentiality \citep{rieke2020future,li2020review}. 
\cite{li2024centralized} further highlight the importance of federated approaches for clinical data analysis and the trade-offs between centralized and decentralized modeling strategies.
However, in practice, iterative communication between sites and a central server can be operationally burdensome in clinical settings. 
\cite{yan2023privacy} has extended federated learning to generalized linear mixed models using iterative communication schemes for correlated electronic health record data. 
This limitation has motivated the development of one-shot FL methods, in which each site shares summary statistics only once \citep{limpoco2025linear}. 
\cite{wu2025cola} has also developed one-shot and lossless distributed algorithms for generalized linear models, sometimes combined with cryptographic protection such as homomorphic encryption.
For linear mixed models (LMMs), distributed likelihood reconstruction enables exact recovery of the pooled likelihood from site-level sufficient statistics \citep{luo2022dlmm}.
These approaches demonstrate that summary statistics are sufficient for exact likelihood-based estimations in LMMs.
Despite their statistical efficiency, summary-based one-shot methods do not inherently guarantee privacy. 
As illustrated in the example above, when within-site sample sizes are small and covariates are discrete, Gram matrices may admit unique integer solutions that enable the reconstruction of the original covariates. 
Such scenarios are common in multi-center medical studies, in which a substantial fraction of sites may contribute to only a few patients. 
An alternative approach is to share site-specific parameter estimates and their covariance matrices, as in the aggregate-estimating equations \citep{lin2011aggregated} and Bayesian federated inference \citep{jonker2024bayesian,jonker2025bayesian}. 
Although such methods reduce reconstruction risk, their theoretical guarantees typically require large within-site sample sizes. 
However, in many medical applications, the asymptotic validity relies on the number of sites rather than on the number of patients per site.

These considerations suggest that a viable one-shot procedure should satisfy the following three properties simultaneously: (i) a likelihood-based estimation that fully exploits finite-sample information within each site, (ii) formal privacy protection against reconstruction, and (iii) valid inference under multi-site asymptotics. This is particularly important when site-level sample sizes are small, where methods relying on large within-site asymptotics may lose their validity. Likelihood-based reconstruction provides a principled way to retain finite-sample information from each site while enabling multi-site asymptotic analysis.
However, these components have largely been developed in isolation. Existing one-shot methods for linear mixed models achieve exact likelihood reconstruction but do not provide formal privacy guarantees \citep{luo2022dlmm}, whereas privacy-preserving approaches for multi-site data analysis typically rely on iterative communication, approximate inference, or large within-site sample sizes \citep{yan2023privacy,jonker2024bayesian,wu2025cola}.
To address this gap, we incorporate differential privacy, a formal framework for quantifying privacy guarantees \citep{chaudhuri2011differentially,dwork2014algorithmic}, into the distributed likelihood framework for linear mixed models. By combining exact summary-based likelihood reconstruction with privacy-preserving perturbation of site-level summaries, the proposed approach enables valid statistical inference under a multi-site regime without access to individual-level data.

The contributions of this study are threefold:
(i) An exact one-shot likelihood reconstruction for LMMs is extended to a formally privatized setting by perturbing site-level sufficient statistics via a Gaussian mechanism.
(ii) Cluster-robust variance estimators are developed that can be computed from perturbed summaries. Robust variance ensures a valid inference even under model misspecification.
(iii) Under a multi-site asymptotic regime, the consistency and asymptotic normality of the proposed estimator are established and the statistical cost of privacy is quantified.
Through simulations and an application to multi-site COVID-19 testing data, this study demonstrates that the proposed method substantially mitigates reconstruction risk while maintaining accurate inference, particularly in settings with small within-site sample sizes.

The remainder of the paper is organized as follows. Section 2 introduces the one-shot estimation for LMMs and reviews likelihood reconstruction from site-level summary statistics. Section 3 presents the proposed differentially private one-shot federated inference procedure, including the privacy mechanism and estimation method. Section 4 establishes theoretical properties of the proposed estimator under a multi-site asymptotic regime. Section 5 reports simulation studies evaluating estimation accuracy, privacy protection, and inference performance. Section 6 illustrates the proposed method through an application to multi-site clinical data. Section 7 concludes the paper. Additional technical details and supplementary results are provided in the Web appendix.

\section{Lossless one-shot estimation for LMMs}
\label{sec-method}

\subsection{LMM and individual participant data likelihood}
Let $\y$ be an observable $N$-variate vector and $\X$ be an $N \times p$ known matrix of covariates. Assume the LMM \citep{laird1982random}:
\begin{align}
\label{eq-model}
    \y &= \X \bbe + \Z \b + \varepsilon,
\end{align}
where $\bbe\in\mathbb{R}^p$ and $\Z\in\mathbb{R}^{N\times K}$ is the known site-membership matrix (that is, $Z_{ik}=1$ if observation $i$ belongs to site $k$, and $0$ otherwise). We assume a random-intercept working covariance:
\[
\b \sim N(\mathbf{0}, \tau^2 I_K),\quad
\varepsilon \sim N(\mathbf{0}, \sigma^2 I_N),\quad
\b \perp \varepsilon,
\]
where $\mathrm{Cov}(\y\mid \X)=\Sigma=\sigma^2 I_N + \tau^2 \Z\Z^\top$.
The log-likelihood function for the model parameters $(\bbe, \sigma^2, \tau^2)$ based on individual participant data (IPD) is given by
\begin{align*}
    l_{IPD}(\bbe, \sigma^2, \tau^2) &= -\frac{1}{2} \left\{ \log |\Sigma| + (\y - \X \bbe)^\top \Sigma^{-1} (\y - \X \bbe) \right\}.
\end{align*}
As $\Z$ is the site-membership matrix, the covariance matrix admits a block-diagonal representation:
\[
\Sigma=\mathrm{blockdiag}(\Sigma_1,\dots,\Sigma_K),
\quad
\Sigma_k=\sigma^2 I_{n_k}+\tau^2 \mathbf 1_{n_k}\mathbf 1_{n_k}^\top,
\]
where $n_k$ denotes the sample size of site $k$ and $\mathbf 1_{n_k}$ is an $n_k$-dimensional vector of ones.

Although LMMs allow for more general covariance structures (for example, \cite{kubokawa2021general}), a random-intercept working covariance is adopted to enable lossless likelihood reconstruction from site-level summaries.
This structure yields a closed-form block representation of $\Sigma_k$, which is essential for a one-shot federated estimation.

\subsection{Lossless reconstruction from summaries}
\label{sec-lossless-likelihood}
The likelihood-based estimation under model \eqref{eq-model} typically requires IPD $(\y, \X)$.
However, sharing IPD across sites may not be feasible because of privacy and governance reasons.
To address this limitation, likelihood reconstruction using only site-level sufficient statistics is considered, as proposed by \cite{luo2022dlmm}.
The theoretical justification for sufficiency and likelihood equivalence is provided in Section~\ref{sec-theory}.
Here, the concrete one-shot estimation procedure is described.

\paragraph{Site-level summaries}

Each site $k$ shares two matrices $(\S_k, \T_k)$ defined as follows:
\begin{align}
\label{eq-summary-statistics}
\S_k &=
\begin{pmatrix}
(\S_k)_{\y\y} & (\S_k)_{\y\X} \\
(\S_k)_{\X\y} & (\S_k)_{\X\X}
\end{pmatrix}
=
\begin{pmatrix}
\y_k^\top \y_k & \y_k^\top \X_k \\
\X_k^\top \y_k & \X_k^\top \X_k
\end{pmatrix}, \\[1ex]
\T_k &=
\begin{pmatrix}
(\T_k)_{\y\y} & (\T_k)_{\y\X} \\
(\T_k)_{\X\y} & (\T_k)_{\X\X}
\end{pmatrix}
=
\begin{pmatrix}
\y_k^\top \mathbf{1}_{n_k}\mathbf{1}_{n_k}^\top \y_k &
\y_k^\top \mathbf{1}_{n_k}\mathbf{1}_{n_k}^\top \X_k \\
\X_k^\top \mathbf{1}_{n_k}\mathbf{1}_{n_k}^\top \y_k &
\X_k^\top \mathbf{1}_{n_k}\mathbf{1}_{n_k}^\top \X_k
\end{pmatrix}.
\end{align}
Here, $n_k$ denotes the sample size of site $k$.
The matrices $\S_k$ and $\T_k$ contain all the quadratic forms required to reconstruct the likelihood under the random-intercept model and are constructed from site-level data $(\y_k, \X_k)$.

\paragraph{One-shot likelihood reconstruction}

Using only $\{\S_k, \T_k\}_{k=1}^K$, the log likelihood function for $(\bbe, \sigma^2, \tau^2)$ can be expressed as follows:
\begin{align}
\label{eq-oneshot-ml}
l_{ML}(\bbe, \sigma^2, \tau^2; \{\S_k, \T_k\}_{k=1}^K)
&= - \frac{1}{2} \sum_{k=1}^K
\Bigg[
\log (\sigma^2)^{n_k-1}(\sigma^2+n_k \tau^2)
\nonumber\\
&\quad
+ \frac{1}{\sigma^2}
\begin{pmatrix}
1 \\
-\bbe
\end{pmatrix}^\top
\left\{
\S_k
-
\frac{\tau^2}{\sigma^2 + n_k \tau^2}
\T_k
\right\}
\begin{pmatrix}
1 \\
-\bbe
\end{pmatrix}
\Bigg].
\end{align}
Importantly, this expression is algebraically identical to the IPD-based likelihood obtained from $(\y, \X)$, such that no information loss occurs under the exact summaries.
The restricted maximum likelihood (REML) function can likewise be written as follows:
\begin{align}
\label{eq-oneshot-reml}
l_{REML}(\sigma^2, \tau^2; \{\S_k, \T_k \}_{k=1}^K)
&=
l_{ML}(\hat{\bbe}(\sigma^2,\tau^2), \sigma^2, \tau^2; \{\S_k, \T_k \}_{k=1}^K)
\nonumber\\
&\quad
-
\frac{1}{2} \log
\left|
\frac{1}{\sigma^2}
\sum_{k=1}^K
\left\{
(\S_k)_{\X \X}
-
\frac{\tau^2}{\sigma^2 + n_k \tau^2}
(\T_k)_{\X \X}
\right\}
\right|,
\end{align}
where $\hat{\bbe}(\sigma^2,\tau^2)$ denotes the maximizer of the likelihood with respect to $\bbe$ for fixed $(\sigma^2,\tau^2)$, which can be expressed in closed form using the summary statistics.
This establishes a one-shot federated estimator that requires only a single communication of the site-level summaries.
Although the REML function can also be reconstructed from site-level summaries, the additional determinant term involves aggregated cross-site information, which becomes numerically unstable under Gaussian perturbation.
In particular, determinant amplification can lead to excessive noise when DP is imposed.
Therefore, subsequent developments focus on maximum likelihood (ML)-based inference, which allows for a more stable privacy-preserving extension.

\subsection{Proposed cluster-robust inference from summaries}
\label{sec-robust-variance}

Under the working random-intercept covariance, \cite{luo2022dlmm} showed that the pooled likelihood can be reconstructed exactly from site-level quadratic summaries, as described in Section \ref{sec-lossless-likelihood}.
Although point estimation follows directly from this reconstruction, model mis-specifications in the mean or covariance structure may occur.
To ensure a valid inference under potential misspecification, the cluster-robust variance estimator $\hat{\V}_{\mathrm{CR0}}$ \citep{white1980heteroskedasticity, liang1986longitudinal} is adopted, given by:
\begin{align*}
    \hat{\V}_{\mathrm{CR0}}
    &= (\X^\top \Sigma^{-1} \X)^{-1} \left( \sum_{k=1}^K \X_k^\top \Sigma_k^{-1} \hat{\e}_k \hat{\e}_k^\top \Sigma_k^{-1} \X_k \right) (\X^\top \Sigma^{-1} \X)^{-1},
\end{align*}
where $\hat{\e}_k=\y_k - \X_k \hat{\bbe}$ denotes the residual vector of site $k$.

The key question is whether $\hat{\V}_{\mathrm{CR0}}$ can be computed using site-level summaries only.
To address this question, the following result is established.
\begin{theorem}[Likelihood reconstruction from site-level summaries]
\label{th-consistent-non-dp}
Under the LMM \eqref{eq-model}, the collection of quadratic summaries $\{\S_k, \T_k\}_{k=1}^K$ permits the exact reconstruction of the pooled ML and REML functions.
In particular, both $l_{ML}(\bbe, \sigma^2, \tau^2)$ and $l_{REML}(\bbe, \sigma^2, \tau^2)$ can be expressed in terms of $\{\S_k, \T_k\}_{k=1}^K$.
Furthermore, the cluster-robust variance estimator $\hat{\V}_{\mathrm{CR0}}$ admits an equivalent representation based solely on $\{\S_k, \T_k\}_{k=1}^K$, as follows:
\begin{align*}
    \hat{\V}_{\mathrm{CR0}} &= \left(\sum_{k=1}^K \W_k\right)^{-1} \left( \sum_{k=1}^K \P_k \right) \left(\sum_{k=1}^K \W_k\right)^{-1},
\end{align*}
where the components $\W_k$, $\Q_k$, and $\P_k$ are defined as follows:
\begin{align*}
    \W_k &= \X_k^\top \Sigma_k^{-1} \X_k = \frac{1}{\sigma^2} (\S_k)_{\X \X} - \frac{\tau^2}{\sigma^2(\sigma^2+n_k \tau^2)} (\T_k)_{\X \X}, \\
    \Q_k &= \X_k^\top \Sigma_k^{-1} \y_k = \frac{1}{\sigma^2} (\S_k)_{\X \y} - \frac{\tau^2}{\sigma^2(\sigma^2+n_k \tau^2)} (\T_k)_{\X \y}, \\
    \P_k &= (\Q_k - \W_k \hat{\bbe}) (\Q_k - \W_k \hat{\bbe})^\top.
\end{align*}
\end{theorem}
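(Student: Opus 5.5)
The plan is to work block by block, using the block-diagonal structure $\Sigma=\mathrm{blockdiag}(\Sigma_1,\dots,\Sigma_K)$. Both the log-likelihood and the sandwich estimator split into sums of per-site terms, so each site can be handled separately. The key tool is a closed form for $\Sigma_k^{-1}$ and $|\Sigma_k|$.

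\textbf{Inverse and determinant of $\Sigma_k$.} Since $\Sigma_k=\sigma^2 I_{n_k}+\tau^2\mathbf 1_{n_k}\mathbf 1_{n_k}^\top$ is a rank-one update of a scaled identity, the Sherman--Morrison formula gives
\[
\Sigma_k^{-1}=\frac{1}{\sigma^2}I_{n_k}-\frac{\tau^2}{\sigma^2(\sigma^2+n_k\tau^2)}\mathbf 1_{n_k}\mathbf 1_{n_k}^\top .
\]
The eigenvalues of $\Sigma_k$ are $\sigma^2+n_k\tau^2$ (once, eigenvector $\mathbf 1_{n_k}$) and $\sigma^2$ (with multiplicity $n_k-1$). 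Hence
\[
|\Sigma_k|=(\sigma^2)^{n_k-1}(\sigma^2+n_k\tau^2).
\]

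\textbf{ML and REML reconstruction.} Write $\y_k-\X_k\bbe=(\y_k,\X_k)(1,-\bbe^\top)^\top$. Then every quadratic form $\mathbf a^\top\Sigma_k^{-1}\mathbf b$ with $\mathbf a,\mathbf b$ columns of $(\y_k,\X_k)$ equals
\[
\sigma^{-2}\mathbf a^\top\mathbf b-\frac{\tau^2}{\sigma^2(\sigma^2+n_k\tau^2)}\,\mathbf a^\top\mathbf 1\mathbf 1^\top\mathbf b .
\]
These are exactly the corresponding entries of $\S_k$ and $\T_k$. Summing over $k$ reproduces \eqref{eq-oneshot-ml}, using $\log|\Sigma|=\sum_k\log|\Sigma_k|$.

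For REML:
\begin{itemize}
\item the extra term is $-\tfrac12\log|\X^\top\Sigma^{-1}\X|$, and $\X^\top\Sigma^{-1}\X=\sum_k\W_k$ by the same identity applied to the $\X\X$ blocks;
\item the profiled estimate is $\hat\bbe(\sigma^2,\tau^2)=(\sum_k\W_k)^{-1}\sum_k\Q_k$, which also depends on the summaries only.
\end{itemize}
This gives \eqref{eq-oneshot-reml}. Although the statement writes $l_{REML}(\bbe,\sigma^2,\tau^2)$, $\bbe$ is profiled out there.

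\textbf{Sandwich estimator.} Again using block diagonality, the bread is $\X^\top\Sigma^{-1}\X=\sum_k\W_k$, and the formulas for $\W_k$ and $\Q_k$ follow from the displayed form of $\Sigma_k^{-1}$. For the meat, observe that
\[
\X_k^\top\Sigma_k^{-1}\hat\e_k=\X_k^\top\Sigma_k^{-1}\y_k-\X_k^\top\Sigma_k^{-1}\X_k\hat\bbe=\Q_k-\W_k\hat\bbe .
\]
Therefore
\[
\X_k^\top\Sigma_k^{-1}\hat\e_k\hat\e_k^\top\Sigma_k^{-1}\X_k=(\Q_k-\W_k\hat\bbe)(\Q_k-\W_k\hat\bbe)^\top=\P_k .
\]
The key point is that the outer product never needs the individual residuals, only the $p$-vector $\X_k^\top\Sigma_k^{-1}\hat\e_k$. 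Finally, $\hat\bbe$, $\hat\sigma^2$ and $\hat\tau^2$ are computed from the summaries, so every plug-in quantity is summary-based.

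\textbf{Main obstacle.} The only step requiring care is this rewriting of the meat term. It is not obviously a function of quadratic summaries, since $\hat\e_k\hat\e_k^\top$ is an $n_k\times n_k$ object. I would resolve it by factoring through the score vector as above, and then check that each entry of $\Q_k$ and $\W_k$ is linear in $\S_k$ and $\T_k$. The rest is routine algebra.
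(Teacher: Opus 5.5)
Your proposal is correct and follows the paper's proof essentially step for step: site-wise splitting of $\log|\Sigma|$ and of the quadratic form, expansion of $(\y_k-\X_k\bbe)^\top\Sigma_k^{-1}(\y_k-\X_k\bbe)$ into entries of $\S_k$ and $\T_k$, the identity $\X^\top\Sigma^{-1}\X=\sum_k\W_k$ for both the REML determinant and the bread, and $\X_k^\top\Sigma_k^{-1}\hat{\e}_k=\Q_k-\W_k\hat{\bbe}$ for the meat. Beyond the paper, you explicitly derive the Sherman--Morrison form of $\Sigma_k^{-1}$ and the determinant $|\Sigma_k|$, which the paper uses without derivation, and you note that the profiled $\hat{\bbe}(\sigma^2,\tau^2)$ appearing in the REML is itself summary-based.
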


Theorem \ref{th-consistent-non-dp} establishes a finite-sample likelihood equivalence, and the likelihood function and cluster-robust variance computed from individual-level data can be losslessly reconstructed from the site-level summaries $\{\S_k, \T_k\}$.
The consistency and asymptotic normality of the ML and REML estimators still require the usual regularity conditions; see \citet{jiang1996reml}.

Although the CR0 variance estimator provides a natural sandwich-based correction under model misspecification, it is well known to exhibit downward bias when the number of clusters is small. 
Therefore, scalar small-sample adjustments can be applied by multiplying $\hat{\V}_{\mathrm{CR0}}$ with a constant factor. 
In particular, the CR1, CR1p, and CR1S corrections correspond to multiplication by $K/(K-1)$, $K/(K-p)$, and $[K(N-1)]/[(K-1)(N-p)]$, respectively, without requiring IPD.
By contrast, CR2 and CR3 require cluster-specific leverage adjustments based on individual-level design and residual information, and thus cannot be implemented from quadratic summaries alone \citep{cameron2015practitioner, pustejovsky2018small}.

\section{Differentially private one-shot estimation}

\subsection{DP mechanism}
\label{sec-define-dp}

Sharing the site-level summaries $\{\S_k, \T_k\}$ enables ML estimation without access to IPD.
However, releasing these summaries without modification may pose a reconstruction risk because the original covariates can potentially be recovered through matrix factorization (see Appendix A). 
To mitigate the risk of individual data reconstruction from summaries, a DP framework is adopted.
As the primary concern is the reconstruction of matrix-valued quantities, such as $\X_k^\top \X_k$, proximity is measured using the Frobenius norm. 
For an $n \times m$ matrix $\A$, the Frobenius norm is defined as follows:
\[
\|\A\|_F = \sqrt{\sum_{i=1}^{n}\sum_{j=1}^{m}A_{ij}^2},
\]
where $A_{ij}$ denotes the $(i,j)$-th entry of $\A$.

Define two datasets $\D$ and $\D'$ to be adjacent if they differ by exactly one individual record, i.e., there exists an index $i$ such that $(X_{i}, y_{i}) \ne (X'_{i}, y'_{i})$, while all other observations are identical.
Under this adjacency concept, DP is defined as follows.

\begin{definition}[Differential privacy; Definition 2.4 in \cite{dwork2014algorithmic}]
\label{def-dp}
Let $\mathcal{X}$ denote the space of datasets consisting of observations $(X,y)$, and $\mathcal{Y}$ be the output space of the randomized mechanism.
A randomized algorithm $\mathcal{M} : \mathcal{X} \to \mathcal{Y}$ is said to be $(\varepsilon,\delta)$-differentially private if for all adjacent $\D,\D' \in \mathcal{X}$ and for all measurable sets $\mathcal{S} \subseteq \mathcal{Y}$,
\[
P\big(\mathcal{M}(\D) \in \mathcal{S}\big)
\le
e^{\varepsilon}
P\big(\mathcal{M}(\D') \in \mathcal{S}\big)
+ \delta,
\]
where the probability is taken over the internal randomness of $\mathcal{M}$.
\end{definition}

A Gaussian perturbation mechanism is employed to construct a differentially private mechanism.
Let $Q : \mathcal{X} \to \mathbb{R}^{m\times n}$ be a matrix-valued query. In our setting, the query $Q$ corresponds to the site-level summaries $(S_k, T_k)$.
To ensure finite global sensitivity, we assume that each individual record is bounded, i.e., the covariates and outcome are restricted to prespecified finite ranges.
The global sensitivity of $Q$ under the Frobenius norm is defined as follows:
\[
\Delta_F
= \sup_{\text{adjacent }\D,\D'} \|Q(\D) - Q(\D')\|_F.
\]
The Gaussian mechanism is defined as $\mathcal{M}_g(\D) = Q(\D) + \U$, where $\U$ is a random matrix with entries that are i.i.d. $N(0,\sigma_{DP}^2)$ with $\sigma_{DP} = \Delta_F \sqrt{2 \log (1.25/\delta)} / \varepsilon$.

\begin{proposition}[Gaussian mechanism]
\label{th-gaussian-mechanism-dp}
Let $Q$ be a matrix-valued query with global sensitivity $\Delta_F$.
Then the mechanism
\[
\mathcal{M}_g(\D)=Q(\D)+\U,
\quad
U_{ij}\sim N(0,\sigma_{DP}^2),
\]
is $(\varepsilon,\delta)$-differentially private.
\end{proposition}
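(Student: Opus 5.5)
We reduce the statement to the standard vector Gaussian mechanism (Theorem A.1 in \cite{dwork2014algorithmic}) by vectorization. Let $d=mn$ and define $q(\D)=\mathrm{vec}(Q(\D))\in\mathbb{R}^d$. For any matrix $\A$ we have $\|\mathrm{vec}(\A)\|_2=\|\A\|_F$, so the $\ell_2$-sensitivity of $q$ equals $\Delta_F$. Moreover, $\mathrm{vec}(\U)\sim N(\mathbf 0,\sigma_{DP}^2 I_d)$, and $\mathrm{vec}$ is a measurable bijection between $\mathbb{R}^{m\times n}$ and $\mathbb{R}^d$. Hence it suffices to prove the claim for $\mathcal{M}(\D)=q(\D)+\v$ with $\v\sim N(\mathbf 0,\sigma_{DP}^2 I_d)$, since events on matrices correspond one-to-one to events on vectors.

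\textbf{Step 1: reduce the privacy loss to one dimension.} Fix adjacent $\D,\D'$ and set $\mu=q(\D)-q(\D')$, so that $\|\mu\|_2\le\Delta_F$. At an output $o=q(\D)+\v$, the privacy loss is
\[
L(o)=\log\frac{\phi(o-q(\D))}{\phi(o-q(\D'))}=\frac{\|\v+\mu\|^2-\|\v\|^2}{2\sigma_{DP}^2}=\frac{2\v^\top\mu+\|\mu\|^2}{2\sigma_{DP}^2}.
\]
By rotation invariance, $\v^\top\mu\sim N(0,\sigma_{DP}^2\|\mu\|^2)$. The problem is therefore one-dimensional and depends on the data only through $\|\mu\|$. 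Because the bound we derive becomes worse as $\|\mu\|$ grows, we may take $\|\mu\|=\Delta_F$.

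\textbf{Step 2: bound the tail of the privacy loss.} Write $\v^\top\mu=\Delta_F\sigma_{DP}Z$ with $Z\sim N(0,1)$. Then $L\le\varepsilon$ whenever
\[
Z\le \frac{\varepsilon\sigma_{DP}}{\Delta_F}-\frac{\Delta_F}{2\sigma_{DP}}=:t .
\]
Substituting $\sigma_{DP}=c\Delta_F/\varepsilon$ with $c=\sqrt{2\log(1.25/\delta)}$ gives $t=c-\varepsilon/(2c)$. We then need $P(Z>t)\le\delta$. Using the Mills-ratio bound $P(Z>t)\le \phi(t)/t$, the requirement becomes
\[
\log t+t^2/2\ge\log\bigl(1/(\delta\sqrt{2\pi})\bigr).
\]
Expanding $t^2/2=\log(1.25/\delta)-\varepsilon/2+\varepsilon^2/(8c^2)$, this reduces to showing that $\log t-\varepsilon/2+\log(1.25\sqrt{2\pi})\ge0$. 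That inequality holds when $\varepsilon\le1$ and $\delta$ is small, since then $c\ge\sqrt{2\log 1.25}$ and $t$ is bounded below accordingly. This inequality chain is the main obstacle. The constant $1.25$ is tuned for $\varepsilon\in(0,1)$, and the statement as written omits this restriction. I would therefore make the standard assumption $\varepsilon\in(0,1)$ explicit, following \cite{dwork2014algorithmic}, and either cite their Theorem A.1 for the computation or verify it as above.

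\textbf{Step 3: convert the tail bound into $(\varepsilon,\delta)$-DP.} Let $B=\{o: L(o)>\varepsilon\}$, so that $P(\mathcal{M}(\D)\in B)\le\delta$ by Step 2. For any measurable $\mathcal{S}$,
\[
P(\mathcal{M}(\D)\in\mathcal{S})\le P(\mathcal{M}(\D)\in\mathcal{S}\cap B^c)+\delta\le e^{\varepsilon}P(\mathcal{M}(\D')\in\mathcal{S})+\delta,
\]
where the second inequality holds because the density ratio is at most $e^\varepsilon$ on $B^c$. Finally, since the pair $(\S_k,\T_k)$ can be stacked into a single matrix query whose Frobenius sensitivity is $\Delta_F$, the result applies directly to releasing the site summaries. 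Boundedness of each record guarantees that $\Delta_F<\infty$.
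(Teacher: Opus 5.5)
Your argument follows the paper's route: vectorize so that $\|\mathrm{vec}(\A)\|_2=\|\A\|_F$ turns $\Delta_F$ into an $\ell_2$-sensitivity, then apply the vector Gaussian mechanism, which the paper only cites while you also sketch its privacy-loss proof (in that sketch, $c\ge\sqrt{2\log 1.25}$ alone does not make $t=c-\varepsilon/(2c)$ large enough; the Mills-ratio step needs roughly $c\ge 1.02$, i.e.\ $\delta\lesssim 0.74$, and larger $\delta$ is trivial because then $P(Z>t)<0.54<\delta$). Your caveat is correct and substantive: the paper's Lemma drops the hypothesis $\varepsilon\in(0,1)$ under which this calibration is proved, and without some such restriction the proposition is false, since for $\delta=0.01$ the exact privacy profile of the Gaussian mechanism shows that $\sigma_{DP}=\Delta_F\sqrt{2\log(1.25/\delta)}/\varepsilon$ no longer gives $(\varepsilon,\delta)$-DP at $\varepsilon=7$.
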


In our setting, the query involves cross-product summaries including $\X^\top \X$, $\X^\top \y$, and $\y^\top \y$.
For example, consider the query $Q(\D)=\X^\top \X$.
Suppose that each covariate vector $\x_i \in \{0,1\}^p$ and that two design matrices are adjacent if they differ in exactly one row.
Let $\X$ and $\X'$ differ only in the $i$th row and be denoted by $\x$ and $\x'$, respectively.
Then $Q(\X)-Q(\X') = \x \x^\top - \x' \x'^\top$.
Using the triangle inequality and the identity $\|\x \x^\top\|_F=\|\x\|_2^2$ yields $\|Q(\X)-Q(\X')\|_F \le \|\x\|_2^2 + \|\x'\|_2^2$.
As $\|\x\|_2^2 \le p$ for binary covariates, it follows that $\Delta_F \le 2p$.

\subsection{Proposed differentially private one-shot estimation}

To mitigate the reconstruction risks, the Gaussian perturbation mechanism introduced in Section \ref{sec-define-dp} is applied to the shared summaries. Specifically, each site releases $\tilde{\S}_k = \S_k + \U_{k1}$ and $\tilde{\T}_k = \T_k + \U_{k2}$, where the entries of $\U_{k1}$ and $\U_{k2}$ are independent zero-mean Gaussian random variables with variance calibrated to achieve $(\varepsilon, \delta)$-DP. 

Using noisy summaries $\{ \tilde{\S}_k, \tilde{\T}_k \}_{k=1}^K$, the perturbed likelihood function is defined as follows:
\begin{align}
\label{eq-ll-dp}
    l_{DP}(\bbe, \sigma^2, \tau^2; \{\tilde{\S}_k, \tilde{\T}_k\}_{k=1}^K )
    =
    l_{ML}(\bbe, \sigma^2, \tau^2; \{\tilde{\S}_k, \tilde{\T}_k\}_{k=1}^K).
\end{align}
Note that $l_{DP}$ is a perturbed objective function due to the injected noise and is not the true likelihood. The DP-perturbed ML estimator is defined as follows:
\[
(\hat{\bbe}_{DP}, \hat{\sigma}_{DP}^2, \hat{\tau}_{DP}^2)
=
\arg\max_{\bbe, \sigma^2, \tau^2}
l_{DP}(\bbe, \sigma^2, \tau^2; \{\tilde{\S}_k, \tilde{\T}_k\}_{k=1}^K).
\]
The DP-based ML estimator and its robust variance estimator are given by:
\begin{align}
\label{eq-estimator-dp}
    \hat{\bbe}_{DP}
    &= \left( \sum_{k=1}^K \tilde{\W}_k \right)^{-1}
       \left( \sum_{k=1}^K \tilde{\Q}_k \right), \\
\label{eq-variance-dp}
    \hat{\V}_{DP}
    &= \left(\sum_{k=1}^K \tilde{\W}_k\right)^{-1}
       \left( \sum_{k=1}^K \tilde{P}_k \right)
       \left(\sum_{k=1}^K \tilde{\W}_k\right)^{-1},
\end{align}
where
\begin{align*}
    \tilde{\W}_k
    &= \frac{1}{\hat{\sigma}_{DP}^2} (\tilde{\S}_k)_{\X \X}
       - \frac{\hat{\tau}_{DP}^2}
              {\hat{\sigma}_{DP}^2
               (\hat{\sigma}_{DP}^2 + n_k \hat{\tau}_{DP}^2)}
         (\tilde{\T}_k)_{\X \X}, \\
    \tilde{\Q}_k
    &= \frac{1}{\hat{\sigma}_{DP}^2} (\tilde{\S}_k)_{\X \y}
       - \frac{\hat{\tau}_{DP}^2}
              {\hat{\sigma}_{DP}^2
               (\hat{\sigma}_{DP}^2 + n_k \hat{\tau}_{DP}^2)}
         (\tilde{\T}_k)_{\X \y}, \\
    \tilde{\P}_k
    &= (\tilde{\Q}_k - \tilde{\W}_k \hat{\bbe}_{DP})
       (\tilde{\Q}_k - \tilde{\W}_k \hat{\bbe}_{DP})^\top.
\end{align*}
The DP-perturbed ML estimator $(\hat{\bbe}_{DP}, \hat{\sigma}_{DP}^2, \hat{\tau}_{DP}^2)$ is defined as the joint maximizer of $l_{DP}$.
At the maximizer, $\hat{\bbe}_{DP}$ admits a closed-form representation of \eqref{eq-estimator-dp}.
The variance estimator $\hat{\V}_{DP}$ corresponds to the cluster-robust (CR0) sandwich variance.

To preserve the symmetry of $\tilde{\S}_k$ and $\tilde{\T}_k$, post-processing symmetrization is applied, where $\U_{k1} \leftarrow (\U_{k1} + \U_{k1}^\top)/2$ and $\U_{k2} \leftarrow (\U_{k2} + \U_{k2}^\top)/2$.
The post-processing step does not affect the DP guarantee \citep{dwork2014algorithmic}.

\section{Asymptotic theory}
\label{sec-theory}

This section establishes the asymptotic properties of the proposed estimator under DP.
All proofs are provided in Appendix \ref{app-proofs}.

A multi-site asymptotic regime is considered in which the number of sites $K \to \infty$ and within-site sample sizes $n_k$ are allowed to remain bounded.
Under this framework, the DP-perturbed objective function remains a sum of independent site-level contributions and therefore fits into the general theory of M-estimation.

First, the notation is introduced.
Let $\mathcal{D}_k = \{n_k, \tilde{\S}_k, \tilde{\T}_k\}$ and the site-wise contribution is defined as follows:
\begin{align*}
    m_{\bth}(\mathcal{D}_k) &= - \frac{1}{2} \left[ \log (\sigma^2)^{n_k-1}(\sigma^2 + n_k \tau^2) + \frac{1}{\sigma^2} 
    \begin{pmatrix}
        1 \\
        -\bbe
    \end{pmatrix}^\top
    \left\{ \tilde{\S}_k - \frac{\tau^2}{\sigma^2 + n_k \tau^2} \tilde{\T}_k \right\}
    \begin{pmatrix}
        1 \\
        -\bbe
    \end{pmatrix}
    \right].
\end{align*}
The full DP objective is expressed as $l_{DP}(\bth) = \sum_{k=1}^K m_{\bth}(\mathcal{D}_k)$, where $\bth=(\bbe, \sigma^2, \tau^2)$.
Hence, this estimator is an M-estimator with independent site-level contributions.
The gradient and Hessian of $m_{\bth}(\mathcal{D}_k)$ are defined as follows:
\begin{align*}
    \dot{m}_{\bth}(\mathcal{D}_k) = \frac{\partial}{\partial \bth} m_{\bth}(\mathcal{D}_k), \quad
    \ddot{m}_{\bth}(\mathcal{D}_k) = \frac{\partial^2}{\partial\bth \partial\bth^\top} m_{\bth}(\mathcal{D}_k).
\end{align*}
Let $P$ denote the probability measure under model \eqref{eq-model}.
The following assumptions are imposed:

\begin{enumerate}
    \item[(A1)] Parameter space compactness:
    The parameter space $\Theta \subset \mathbb{R}^{p+2}$ is compact and of the form $\Theta = \mathcal{B}_\bbe \times [\underline\sigma^2, \overline\sigma^2] \times [0, \overline\tau^2]$, where $\mathcal{B}_\bbe \subset \mathbb{R}^{p}$ is compact and $0 < \underline\sigma^2 < \overline\sigma^2 < \infty$, $0 < \overline\tau^2 < \infty$.
    \item[(A2)] Identifiability:
    The expected log-likelihood $M(\bth)=E[m_{\bth}(\mathcal{D}_k)]$ is uniquely maximized at the true (or pseudo-true) parameter $\bth_0 = (\bbe_0, \sigma_0^2, \tau_0^2)$, and $\bth_0$ lies in the interior of $\Theta$.
    \item[(A3)] Site-level independence and uniform law of large numbers: Under the working model, the site-level contributions $\{m_{\bth}(\mathcal{D}_k)\}_{k=1}^K$ are independent and identically distributed, and the non-DP log-likelihood satisfies $M_K^{ML}(\bth) = K^{-1} l_{ML}(\bth)$ and
    $\sup_{\bth \in \Theta} |M_K^{ML}(\bth) - M(\bth)| \xrightarrow{P} 0$.
    \item[(A4)] Independence and moments of DP noise:
    Let $\tilde{\S}_k=\S_k + \U_{k1}$ and $\tilde{\T}_k = \T_k + \U_{k2}$, where
    \begin{enumerate}
        \item[(i)] $\{\U_{k1},\U_{k2}\}_{k=1}^K$ are independent across $k$ and independent of $\{\S_k,\T_k\}$,
        \item[(ii)] each entry has zero mean and finite variance (e.g.\ $(\U_{k\ell})_{ij} \sim N(0,\sigma_{DP}^2)$ i.i.d.).
    \end{enumerate}
    \item[(A5)] Moment and non-singularity conditions:
    \begin{enumerate}
        \item[(i)] $P\|\dot{m}_{\bth_0}(\mathcal{D}_k)\|^2 < \infty$,
        \item[(ii)] $P[\dot{m}_{\bth_0}(\mathcal{D}_k) \dot{m}_{\bth_0}(\mathcal{D}_k)^\top]$ is nonsingular.
    \end{enumerate}
\end{enumerate}

Assumptions (A1)--(A3) and (A5) are standard regularity conditions that ensure the consistency and asymptotic normality of the non-private ML estimator.
Assumption (A4) characterizes the DP perturbation. It requires only independence and finite second moments; therefore, it allows noise distributions beyond the Gaussian mechanism.

Under these conditions, the following consistency and asymptotic normality results hold.

\begin{theorem}
\label{th-consistent-dp}
Under the assumption (A1)--(A4), the estimator $\hat{\bth}_{DP}$, defined as follows:
\begin{align*}
    \hat{\bth}_{DP} &= \arg \max_{\bth \in \Theta} l_{DP}(\bbe, \sigma^2, \tau^2; \{\tilde{\S}_k, \tilde{\T}_k \}_{k=1}^K),
\end{align*}
converges in probability to $\bth_0$ as $K \to \infty$.
\end{theorem}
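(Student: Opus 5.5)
We follow the standard argmax-consistency route for M-estimators (e.g., Theorem 5.7 of van der Vaart). We show that the normalized DP objective $M_K^{DP}(\bth)=K^{-1}l_{DP}(\bth)$ converges to $M(\bth)$ uniformly over $\Theta$. Combined with the identifiability condition (A2) and compactness (A1), this gives $\hat{\bth}_{DP}\xrightarrow{P}\bth_0$. Since (A3) already supplies the uniform law of large numbers for the non-private objective $M_K^{ML}$, the whole task reduces to showing that the perturbation term vanishes uniformly.

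\emph{Step 1: decomposition.} The summaries enter $m_{\bth}$ linearly, so
\[
M_K^{DP}(\bth)-M_K^{ML}(\bth)=-\frac{1}{2K}\sum_{k=1}^K \frac{1}{\sigma^2}
\begin{pmatrix}1\\-\bbe\end{pmatrix}^\top\Big\{\U_{k1}-\frac{\tau^2}{\sigma^2+n_k\tau^2}\U_{k2}\Big\}\begin{pmatrix}1\\-\bbe\end{pmatrix}.
\]
The log-determinant term does not involve the data and cancels exactly.

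\emph{Step 2: uniform bound on the noise term.} On $\Theta$ we have $1/\sigma^2\le 1/\underline\sigma^2$ and $\|(1,-\bbe^\top)^\top\|^2\le 1+\sup_{\mathcal B_\bbe}\|\bbe\|^2=:C_\beta<\infty$. The weight $c_k(\sigma^2,\tau^2)=\tau^2/(\sigma^2+n_k\tau^2)$ lies in $[0,\overline\tau^2/\underline\sigma^2]$. The difficulty is that $c_k$ depends on both $\bth$ and $k$, so we cannot simply pull a single averaged noise matrix out of the sum.

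Our first approach is to expand the quadratic form entrywise. This writes the difference as a finite sum, over index pairs $(i,j)$, of terms $a_{ij}(\bbe)\,K^{-1}\sum_k (\U_{k1})_{ij}/\sigma^2$, plus terms of the form $a_{ij}(\bbe)\,K^{-1}\sum_k c_k(\sigma^2,\tau^2)(\U_{k2})_{ij}/\sigma^2$. Here the $a_{ij}$ are bounded polynomials in $\bbe$.
\begin{itemize}
\item For the $\U_{k1}$ terms, the weak law of large numbers under (A4) gives $K^{-1}\sum_k(\U_{k1})_{ij}\xrightarrow{P}0$. Uniformity in $\bth$ is then automatic, because the $\bth$-dependence factors out and is bounded.
\item For the $\U_{k2}$ terms, the crude triangle inequality bound $K^{-1}\sum_k|(\U_{k2})_{ij}|$ does not vanish, since it tends to $E|U|>0$. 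We therefore need a finer argument.
\end{itemize}

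\emph{Step 3: the $\U_{k2}$ terms (main obstacle).} We will handle these with a chaining or bracketing argument on $g_K(\sigma^2,\tau^2)=K^{-1}\sum_k c_k(\sigma^2,\tau^2)\xi_k$, where $\xi_k=(\U_{k2})_{ij}$ are independent, mean zero, and of bounded variance. For each fixed $(\sigma^2,\tau^2)$ we have $\mathrm{Var}\,g_K\le C/K\to0$, so pointwise convergence holds.

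For equicontinuity we differentiate: $|\partial c_k/\partial\sigma^2|$ and $|\partial c_k/\partial\tau^2|$ are bounded by constants independent of $n_k$ on $\Theta$. For instance, $\partial c_k/\partial\tau^2=\sigma^2/(\sigma^2+n_k\tau^2)^2\le 1/\underline\sigma^2$. Hence $|g_K(\vartheta)-g_K(\vartheta')|\le L\|\vartheta-\vartheta'\|\,K^{-1}\sum_k|\xi_k|$, and the last factor is $O_P(1)$. Pointwise convergence together with this stochastic Lipschitz property on a compact set yields $\sup|g_K|\xrightarrow{P}0$ via a finite $\eta$-net argument.

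If the sites are not identically sized, $c_k$ varies with $n_k$; this does not matter, because only the independence and uniform bounds on $c_k$ and its derivatives are used.

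\emph{Step 4: conclusion.} Combining Steps 1–3 with (A3) gives $\sup_\Theta|M_K^{DP}-M|\xrightarrow{P}0$. Continuity of $M$, which follows from the smoothness of $m_{\bth}$ in $\bth$ and dominated convergence under finite moments, together with the unique maximization in (A2) and compactness (A1), yields the well-separated maximum property. Theorem 5.7 of van der Vaart then gives $\hat{\bth}_{DP}\xrightarrow{P}\bth_0$.

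We expect the main obstacle to be the uniform control of the $\bth$- and $k$-dependent weighting of $\U_{k2}$ in Step 3. We will address it through the Lipschitz/net argument above.
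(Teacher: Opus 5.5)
Your proposal is correct and follows the same skeleton as the paper's proof: Theorem 5.7 of van der Vaart, the triangle inequality against $M_K^{ML}$, (A3) for the non-private part, and a law of large numbers for the noise term, which is linear in the noise. The substantive difference is in the $\U_{k2}$ term, and there your argument is the sound one.

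The paper bounds $\bigl|K^{-1}\sum_k c_k(\sigma^2,\tau^2)(\U_{k2})_{rs}\bigr|$ by $C_2\bigl|K^{-1}\sum_k(\U_{k2})_{rs}\bigr|$ with $C_2=\overline\tau^2/\underline\sigma^2$. That is valid only when $c_k$ does not depend on $k$, i.e., when all $n_k$ are equal. With heterogeneous site sizes, as in the paper's own simulations and application, the weights cannot be pulled out of a signed sum. The natural crude bound is then $C_2K^{-1}\sum_k|(\U_{k2})_{rs}|$, which does not vanish, exactly as you observe. So the paper's argument has a gap precisely at the step you flagged as the main obstacle.

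Your Step 3 closes it correctly. It combines a pointwise Chebyshev bound, Lipschitz bounds on $c_k$ that do not depend on $n_k$, and a finite net over the compact $(\sigma^2,\tau^2)$-rectangle. Two simpler variants are available. The weight depends on $(\sigma^2,\tau^2)$ only through $\rho=\tau^2/\sigma^2$, via $c_k=\rho/(1+n_k\rho)$, so the net can be one-dimensional. And if the $n_k$ take finitely many values, you can group sites by $n_k$ and apply the weak law within each group.

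Two small points remain. First, your Chebyshev step uses that the noise is independent of $n_k$ (or that $n_k$ is deterministic) and has uniformly bounded variance; both are implicit in (A4) and worth stating. Second, you are right that the well-separated maximum needs continuity of $M$, which the paper treats as immediate. Rather than appeal to moment conditions that are not among (A1)--(A4), you can get continuity from (A3) directly. Each $M_K^{ML}$ is continuous on $\Theta$, and along a subsequence $\sup_{\Theta}|M_K^{ML}-M|\to 0$ almost surely. Hence $M$ is a uniform limit of continuous functions and is continuous.
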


The consistency follows from the uniform convergence of the DP-perturbed objective to its population counterpart.
As the perturbation has zero mean and finite variance, its average contribution vanishes as $K \to \infty$ and therefore does not affect the probability limit.

\begin{theorem}
\label{th-asymp-dp}
Under the assumption (A1)--(A5), $\sqrt{K} (\hat{\bth}_{DP} - \bth_0) \xrightarrow{d} N(\mathbf{0}, \V_{DP})$ as $K \to \infty$, where
\begin{align*}
    \V_{DP} &= \left( P[\ddot{m}_{\bth_0}(\mathcal{D}_k)] \right)^{-1}
    \left( P[\dot{m}_{\bth_0}(\mathcal{D}_k) \dot{m}_{\bth_0}(\mathcal{D}_k)^\top] \right)
    \left( P[\ddot{m}_{\bth_0}(\mathcal{D}_k)] \right)^{-1}.
\end{align*}
Furthermore, the sandwich estimator $\hat{\V}_{DP}$, defined in \eqref{eq-variance-dp}, is a consistent estimator of the $(\bbe,\bbe)$-block of $\V_{DP}$ as follows:
\begin{align*}
    \hat{\V}_{DP} \xrightarrow{P} (\V_{DP})_{\bbe\bbe}, \quad K \to \infty.
\end{align*}
\end{theorem}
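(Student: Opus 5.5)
We will prove Theorem~\ref{th-asymp-dp} with the standard M-estimation argument (a Taylor expansion of the estimating equation), using Theorem~\ref{th-consistent-dp} for consistency. The structure of $m_{\bth}(\mathcal{D}_k)$ makes this simple, because $m_{\bth}$ is affine in $(\tilde{\S}_k,\tilde{\T}_k)$:
\[
m_{\bth}(\mathcal{D}_k)=m^{ML}_{\bth}(\S_k,\T_k)+L_{\bth}(\U_{k1},\U_{k2}),
\]
where $L_{\bth}$ is linear in the noise, with coefficients smooth in $\bth$ and bounded on $\Theta$ by (A1), since $\sigma^2\ge\underline\sigma^2>0$ and $\sigma^2+n_k\tau^2\ge\underline\sigma^2$.

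\emph{Step 1 (score expansion).} By (A2), $\bth_0$ is interior, and by Theorem~\ref{th-consistent-dp}, $\hat{\bth}_{DP}\to\bth_0$ in probability. So with probability tending to one, $\sum_k\dot m_{\hat{\bth}_{DP}}(\mathcal{D}_k)=0$. A mean-value expansion gives
\[
0=K^{-1/2}\sum_k\dot m_{\bth_0}(\mathcal{D}_k)+\Big\{K^{-1}\sum_k\ddot m_{\bar{\bth}}(\mathcal{D}_k)\Big\}\sqrt K(\hat{\bth}_{DP}-\bth_0).
\]

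\emph{Step 2 (CLT for the score).} The terms $\dot m_{\bth_0}(\mathcal{D}_k)$ are i.i.d. across sites by (A3) and (A4)(i). Their mean is zero: the non-DP part has mean zero because $\bth_0$ maximizes $M$ at an interior point, so $\nabla M(\bth_0)=0$ (interchanging derivative and expectation is justified by dominated convergence on compact $\Theta$). The noise part has mean zero because $L_{\bth}$ is linear and $E\U=0$. The second moment is finite by (A5)(i). The Lindeberg--L\'evy CLT then yields $N(0,P[\dot m\dot m^\top])$.

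\emph{Step 3 (Hessian).} The Hessian $\ddot m_{\bth}$ is also affine in the noise, with coefficients uniformly bounded and equicontinuous in $\bth$ over $\Theta$. A uniform LLN, based on a Lipschitz-in-$\bth$ envelope with finite mean coming from the finite second moments of $\S_k,\T_k,\U$, together with $\bar{\bth}\to\bth_0$, gives $K^{-1}\sum\ddot m_{\bar{\bth}}\to P[\ddot m_{\bth_0}]$ in probability. The noise contributes nothing in the limit since its mean is zero.

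We need $P[\ddot m_{\bth_0}]$ to be nonsingular. This is the main gap: (A5) only asserts that the outer-product matrix is nonsingular. I would argue it from (A2), since a unique interior maximizer of a smooth $M$ suggests a negative-definite Hessian, or else add this as an explicit regularity requirement. With nonsingularity in hand, Slutsky's lemma gives the stated sandwich limit.

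\emph{Step 4 (consistency of $\hat{\V}_{DP}$).} The $\bbe$-score is
\[
\dot m_{\bbe}=\sigma^{-2}\Big\{(\tilde{\S}_k)_{\X\y}-\tfrac{\tau^2}{\sigma^2+n_k\tau^2}(\tilde{\T}_k)_{\X\y}\Big\}-\sigma^{-2}\Big\{(\tilde{\S}_k)_{\X\X}-\cdots\Big\}\bbe=\tilde{\Q}_k-\tilde{\W}_k\bbe .
\]
Hence $\tilde{\P}_k$ is exactly $\dot m_{\bbe}\dot m_{\bbe}^\top$ evaluated at $\hat{\bth}_{DP}$. Similarly, $-\sum\tilde{\W}_k$ is the $(\bbe,\bbe)$ Hessian block. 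A uniform LLN for these plug-ins (products of quadratics in the noise, needing fourth moments of $\S_k,\T_k,\U$; I would add this mild condition, which holds for Gaussian noise and bounded records) gives
\[
K^{-1}\sum\tilde{\W}_k\to -P[\ddot m]_{\bbe\bbe},\qquad K^{-1}\sum\tilde{\P}_k\to P[\dot m\dot m^\top]_{\bbe\bbe}.
\]

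The remaining subtlety is identifying $(\V_{DP})_{\bbe\bbe}$ with the $\bbe$-block sandwich. This requires the cross Hessian block $P[\ddot m]_{\bbe,(\sigma^2,\tau^2)}$ to vanish. That block is $E[\partial_{\sigma^2,\tau^2}(\tilde{\Q}_k-\tilde{\W}_k\bbe_0)]$, which is zero when the mean model is correct, since $E[\Q_k-\W_k\bbe_0]=0$ for any weights and the noise has mean zero. Under mean misspecification I would note that this block-diagonality must be assumed.

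Finally, the scaling $K$ cancels between the bread and meat, so $\hat{\V}_{DP}$ estimates the variance of $\hat{\bbe}_{DP}$, namely $(\V_{DP})_{\bbe\bbe}/K$. I will state this normalization explicitly.
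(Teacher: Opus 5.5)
Your argument is correct and follows the same M-estimation strategy as the paper, with one change of tool. The paper invokes van der Vaart's Theorem 5.23, which uses a Lipschitz envelope for $m_{\bth}$ plus a second-order expansion of the population criterion $Pm_{\bth}$. You instead expand the score equation and apply a uniform LLN to the empirical Hessian. Theorem 5.23 needs no control of empirical second derivatives. Your route costs little here, because $\ddot{m}_{\bth}$ is affine in $(\tilde{\S}_k,\tilde{\T}_k)$ with bounded continuous coefficients on $\Theta$. It also pays off in Step 4, since the same uniform LLN delivers the bread $K^{-1}\sum_k\tilde{\W}_k$ at $\hat{\bth}_{DP}$.

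The three caveats you raise are genuine. The paper's proof does not resolve them, so your additions are necessary rather than optional.
\begin{itemize}
\item The paper attributes nonsingularity of $P[\ddot{m}_{\bth_0}(\mathcal{D}_k)]$ to (A5)(ii). But (A5)(ii) concerns the outer-product matrix, which under DP carries an extra noise-covariance term absent from the Hessian, so it does not imply the Hessian is invertible.
\item The paper writes $\hat{\V}_{DP}=\hat{\H}_K^{-1}\hat{\B}_K\hat{\H}_K^{-1}$ with $\hat{\B}_K$ essentially $K^{-2}\sum_k\tilde{\P}_k$, the factor needed for that identity. It then claims $\hat{\B}_K$ converges to the meat, which holds only with $K^{-1}$. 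Since $\hat{\V}_{DP}=O_P(K^{-1})$, what actually converges to $(\V_{DP})_{\bbe\bbe}$ is $K\hat{\V}_{DP}$, exactly as in your normalization remark.
\item The paper never checks the $(\bbe,(\sigma^2,\tau^2))$ block of the Hessian. That block must vanish for the $\bbe$-only sandwich to equal $(\V_{DP})_{\bbe\bbe}$ once the variance components are estimated. You are right that it vanishes under a correct mean model and must otherwise be assumed.
\end{itemize}

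One correction: a unique interior maximizer of a smooth $M$ does not force a nonsingular Hessian (consider $-\|\bth-\bth_0\|^4$). So do not try to derive nonsingularity from (A2) in general. The argument does work for the $\bbe\bbe$ block, because $\bbe\mapsto M(\bbe,\sigma_0^2,\tau_0^2)$ is exactly quadratic with Hessian $-E[\W_k]$ evaluated at $(\sigma_0^2,\tau_0^2)$. Uniqueness of the interior maximizer then forces that matrix to be positive definite, so only the variance-component directions need the extra assumption.

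Finally, second moments of the entries of $(\tilde{\S}_k,\tilde{\T}_k)$ already give an integrable envelope for $\tilde{\P}_k$. Fourth moments are more than Step 4 requires.
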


Asymptotic normality is a direct consequence of the standard M-estimation theory.
Compared with the non-private estimator, the matrix:
\[
P[\dot{m}_{\bth_0}(\mathcal{D}_k) \dot{m}_{\bth_0}(\mathcal{D}_k)^\top]
\]
contains an additional variance component induced by the DP perturbation, whereas the leading-order curvature term $P[\ddot{m}_{\bth_0}(\mathcal{D}_k)]$ remains unchanged.
Consequently, the DP estimator achieves a valid inference under potential working covariance mis-specifications, while incurring additional variability owing to privacy noise.

Next, the additional estimation error induced by DP perturbation is quantified.

\begin{proposition}
\label{th-dp-loss}
Under the assumptions (A1)--(A4), the expectation of $L_2$ deviation is as follows:
\begin{align*}
    E\| \hat{\bbe}_{DP} - \hat{\bbe} \|^2
    &= \operatorname{tr}(\Omega) \frac{\sigma_{DP}^2}{K} + o(K^{-1}),
\end{align*}
where $\Omega$ is a positive definite matrix that depends on second-order
structure of the model and design matrices.
\end{proposition}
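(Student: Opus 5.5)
The plan is to linearize $\hat{\bbe}_{DP}$ around the non-private estimator $\hat{\bbe}$, using the closed forms $\hat{\bbe}=(\sum_k \W_k)^{-1}\sum_k \Q_k$ and \eqref{eq-estimator-dp}. First consider the profile map $g(\sigma^2,\tau^2;\{\S_k,\T_k\}) = (\sum_k \W_k)^{-1}\sum_k \Q_k$. Then $\hat{\bbe}=g(\hat\sigma^2,\hat\tau^2;\{\S_k,\T_k\})$ and $\hat{\bbe}_{DP}=g(\hat\sigma^2_{DP},\hat\tau^2_{DP};\{\tilde{\S}_k,\tilde{\T}_k\})$. Split the difference into two pieces:
\begin{align*}
\hat{\bbe}_{DP}-\hat{\bbe}
&= \{g(\hat\sigma^2_{DP},\hat\tau^2_{DP};\tilde{\cdot})-g(\hat\sigma^2,\hat\tau^2;\tilde{\cdot})\}
+\{g(\hat\sigma^2,\hat\tau^2;\tilde{\cdot})-g(\hat\sigma^2,\hat\tau^2;\cdot)\}.
\end{align*}
The first piece is the variance-component effect and the second is the direct noise effect.

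\textbf{Direct noise effect.} For fixed $(\sigma^2,\tau^2)$, write $\tilde{\W}_k=\W_k+\Delta^W_k$ and $\tilde{\Q}_k=\Q_k+\Delta^Q_k$. Here $\Delta^W_k$ and $\Delta^Q_k$ are linear in $(\U_{k1},\U_{k2})$, with coefficients $1/\sigma^2$ and $-\tau^2/\{\sigma^2(\sigma^2+n_k\tau^2)\}$, which are bounded on $\Theta$ by (A1). A first-order expansion of the inverse gives
\[
g(\tilde{\cdot})-g(\cdot)=\bar{\W}^{-1}\frac1K\sum_k(\Delta^Q_k-\Delta^W_k\hat{\bbe})+R_K,
\qquad \bar{\W}=K^{-1}\textstyle\sum_k \W_k .
\]
The leading term is an average of independent, zero-mean terms by (A4), and it is independent of the data. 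Its second moment is therefore $K^{-1}\operatorname{tr}(\bar{\W}^{-1}\bar{\Gamma}\bar{\W}^{-1})\sigma_{DP}^2$. Here $\bar{\Gamma}\sigma_{DP}^2$ is the average covariance of $\Delta^Q_k-\Delta^W_k\bbe_0$, which is a quadratic form in the noise weights evaluated at $(1,-\bbe_0)$. Replacing $\hat{\bbe}$ and $\bar{\W}$ by their limits, via (A3) and Theorem~\ref{th-consistent-dp}, defines
\[
\Omega=W_0^{-1}\Gamma_0W_0^{-1}.
\]
This matrix is positive definite because $W_0=E[\W_k]$ is nonsingular and $\Gamma_0$ is positive definite: the noise entries in the $\X\y$ block alone contribute a positive-definite covariance.

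\textbf{Variance-component effect.} By Theorem~\ref{th-asymp-dp}, $\hat\sigma^2_{DP}-\hat\sigma^2$ and $\hat\tau^2_{DP}-\hat\tau^2$ are $O_P(K^{-1/2})$. The derivative of $g$ in $(\sigma^2,\tau^2)$ is $\bar{\W}^{-1}K^{-1}\sum_k \partial_{(\sigma^2,\tau^2)}(\Q_k-\W_k\bbe)$, evaluated at the estimate. At the pseudo-true value this derivative has mean zero: the cross term of the expected Hessian between $\bbe$ and the variance components vanishes for the Gaussian working likelihood. So the derivative is $O_P(K^{-1/2})$, and the first piece is $O_P(K^{-1})$. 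Its square is then $O_P(K^{-2})=o(K^{-1})$. The remainder $R_K$ is quadratic in the averaged noise and is also $O_P(K^{-1})$. Cross terms between the leading term and these pieces are therefore $o(K^{-1})$ as well.

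\textbf{Main obstacle.} The main difficulty is converting these stochastic orders into a statement about the \emph{expectation} $E\|\cdot\|^2$. The matrix $\sum_k\tilde{\W}_k$ may be near singular with small probability, and the mean of its inverse need not exist. My first approach is to work on the event $\mathcal{E}_K=\{\lambda_{\min}(K^{-1}\sum\tilde{\W}_k)\ge c\}$, whose complement has exponentially small probability under Gaussian noise. Compactness of $\Theta$ bounds the estimator on $\mathcal{E}_K$, and I would establish uniform integrability of $K\|\hat{\bbe}_{DP}-\hat{\bbe}\|^2\mathbf 1_{\mathcal{E}_K}$. Otherwise, I would interpret the statement as an expansion of the leading-order second moment, the usual convention for such results, and state the truncation explicitly. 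The cross-site symmetrization of $\U$ only rescales the entries of $\Gamma_0$ and does not change the argument.
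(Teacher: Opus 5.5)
Your argument follows the same skeleton as the paper's proof. Both make a first-order (Neumann) expansion of $(\sum_k\tilde{\W}_k)^{-1}\sum_k\tilde{\Q}_k$ around the noise-free quantities, keep a leading term linear in the noise, and compute its second moment as the trace of a sandwich. Your version departs from the paper in three places, and in each it is the more accurate one.

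(i) The paper puts $-\W^{-1}\Delta\W\,\W^{-1}\tilde{\Q}\approx-\W^{-1}\Delta\W\,\hat{\bbe}$ into $r_K$ and asserts $\|r_K\|=o_p(K^{-1/2})$. By the very bounds it cites ($\|\W^{-1}\|=O(K^{-1})$, $\|\Delta\W\|_F=O_p(\sqrt K)$, $\hat{\bbe}=O_p(1)$), this term is in fact $O_p(K^{-1/2})$ whenever $\bbe_0\neq\mathbf 0$. The paper's $\Omega=\W_0^{-1}\Sigma_{\Q}\W_0^{-1}$ therefore omits the contribution of $\Delta\W_k\bbe_0$, which is independent of $\Delta\Q_k$ and hence additive. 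Your $\Gamma_0$, built from $\Delta^Q_k-\Delta^W_k\bbe_0$, is the constant the expansion actually produces. Since the statement only asserts some positive definite $\Omega$, the proposition itself is unaffected, but the paper's explicit $\Omega$ is too small.

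(ii) The paper evaluates $\W_k$ and $\tilde{\W}_k$ at the same $(\sigma^2,\tau^2)$, silently ignoring $\hat\sigma^2_{DP}\neq\hat\sigma^2$. Your two-piece decomposition is what handles this.

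(iii) The paper asserts $E\|r_K\|^2=o(K^{-1})$ without argument, so your truncation and uniform-integrability step is genuinely needed. Because $\hat{\bbe}_{DP},\hat{\bbe}\in\mathcal B_\bbe$ are bounded, you only need $P(\mathcal E_K^c)=o(K^{-1})$. For the noise part this already follows from finite second moments, so Gaussian tails are not required. You should also intersect with $\{\lambda_{\min}(\bar{\W})\ge c\}$: the noise-free $\bar{\W}^{-1}$ appears in your leading term and may have no moments under discrete covariates.

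One step needs qualification. Your claim that the $\bbe$--$(\sigma^2,\tau^2)$ cross term of the expected Hessian vanishes relies on a correctly specified mean, $E[\y_k\mid\X_k]=\X_k\bbe_0$. That holds under covariance misspecification, but not for a general pseudo-true $\bbe_0$ under mean misspecification, which (A2) allows. For example, the $\tau^2$-component is proportional to $E[(\sigma^2+n_k\tau^2)^{-2}\X_k^\top\mathbf 1_{n_k}\mathbf 1_{n_k}^\top(\y_k-\X_k\bbe_0)]$. This need not vanish when an omitted covariate has site means correlated with those of the included ones. Your first piece is then $J\,(\hat\sigma^2_{DP}-\hat\sigma^2,\ \hat\tau^2_{DP}-\hat\tau^2)^\top+o_P(K^{-1/2})$ with $J\neq 0$, of the same order as the direct effect. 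You would have to linearize the variance-component difference in the noise via the $(\sigma^2,\tau^2)$-score and fold it into $\Omega$. The $\sigma_{DP}^2/K$ rate survives, but your formula for $\Omega$ does not.

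Finally, obtaining $\hat\sigma^2_{DP}-\hat\sigma^2=O_P(K^{-1/2})$ from Theorem~\ref{th-asymp-dp} imports (A5) and a nonsingular expected Hessian, neither of which the proposition lists. The paper sidesteps this only by ignoring that step.
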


In particular, under a Gaussian mechanism:
\begin{align*}
    E\| \hat{\bbe}_{DP} - \hat{\bbe} \|^2
    &= \operatorname{tr}(\Omega)
    \frac{\Delta_F^2}{K}
    \frac{2 \log(1.25 / \delta)}{\varepsilon^2}
    + o(K^{-1}).
\end{align*}
Therefore, the privacy cost is scaled to $O(\sigma_{DP}^2/K)$.
Stronger privacy protection (smaller $\varepsilon$ or $\delta$) increases the estimation error, whereas increasing the number of sites $K$ mitigates the effect of the perturbation.
When $K$ is moderately large, the additional error becomes negligible, even under relatively stringent privacy budgets.
In practical applications, the privacy parameters $(\varepsilon, \delta)$ should be selected by balancing formal privacy guarantees with the resulting estimation accuracy.

\section{Simulation studies}
This section describes two simulation studies conducted to evaluate the reconstruction risk and estimation performance of the proposed method.
The first simulation examined the reconstruction risk from shared summaries, whereas the second investigated the estimation accuracy under DP.

\subsection{Reconstruction of IPD from the Gram matrix}
\label{sec-sim-dp-check}

\paragraph{Simulation configuration}
The reconstructability of individual-level data was assessed in various privacy settings.
Specifically, a single site ($K=1$) was considered and a covariate matrix $\X_k \in \{0,1\}^{n \times p}$ was generated.
The Gram matrix $(\S_k)_{\X\X} = \X_k^\top \X_k$ was computed and perturbed using a Gaussian mechanism.
Each entry of $\X$ was generated independently from $\mathrm{Bernoulli}(0.5)$.
The sample size $n$ varied in the range $2--20$ in increments of $1$, and the dimensions $p$ were set to $3$, $5$, and $10$.
To reconstruct the IPD, the perturbed Gram matrix was rounded to the nearest integer and a 0--1 mixed integer programming problem was solved (see Appendix A), yielding the reconstructed data $\hat{\X}_k$.

\paragraph{Privacy calibration}
Under the binary design $\X_k \in \{0,1\}^{n\times p}$ and the adjacency notion that two datasets differ by one individual, the global sensitivity is $\Delta_F=2p$, as discussed in Section \ref{sec-define-dp}. 
As sensitivity increases with the dimension $p$, fixing the privacy parameter $\varepsilon$ across dimensions would result in a noise variance that grows linearly with $p$. 
However, reconstruction from the Gram matrix was formulated as a 0--1 integer programming problem with combinatorial complexity increasing rapidly with $p$. 
To reflect this structural increase in reconstruction difficulty, a dimension-adjusted privacy-allocation strategy was adopted and $\varepsilon(p) = 2p \, \varepsilon_0$ was set, where $\varepsilon_0 \in \{1,2,4,6,8,10,12,16,20\}$ was the baseline privacy level.
Parameter $\delta$ was fixed at $\delta=0.01$ for all the scenarios.
Under this specification, the Gaussian noise variance became $\sigma_{DP}^2 = 2 \log(1.25/\delta) /\varepsilon_0^2$.
Therefore, the effective perturbation was controlled solely by $\varepsilon_0$ with a fixed $\delta$.
This setup was used to examine the reconstruction risk across a wide range of privacy levels.

\paragraph{Evaluation metrics}
For each configuration, 10,000 simulation replicates were performed.
The reconstruction accuracy was evaluated using the Hamming distance between $\hat{\X}_k$ and the true matrix $\X_k$.
For the matrices $A,B \in \mathbb{R}^{m\times n}$, the Hamming distance is defined as follows:
\[
d_H(A,B) = \sum_{i=1}^{m} \sum_{j=1}^{n} \mathbb{I}(A_{ij} \neq B_{ij}),
\]
where $\mathbb{I}(\cdot)$ is the indicator function.
As the Gram matrix does not identify the row order, both $\hat{\X}_k$ and $\X_k$ were lexicographically sorted before comparison.
The reconstruction rates are defined as follows:
\begin{align*}
\text{Matrix-level reconstruction rate}
&=
\mathbb{I}\big(d_H(\hat{\X}_k,\X_k)=0\big), \\
\text{Element-level reconstruction rate}
&=
1 - \frac{d_H(\hat{\X}_k,\X_k)}{np}.
\end{align*}

\paragraph{Simulation results}
Figure \ref{fig-dp-check} presents the reconstruction rates.
When no noise was added (Ref), the reconstruction rate was high, particularly for a small $n$.
By contrast, introducing Gaussian perturbation substantially reduced the reconstruction rate, particularly for smaller values of $\varepsilon_0$ (e.g., $\varepsilon_0 \le 8$).
At the matrix level, even moderate privacy settings led to a near-zero reconstruction probability.
At the element level, the reconstruction accuracy decreased monotonically as $\varepsilon_0$ decreased.
These results indicate that the incorporation of DP effectively mitigates the risk of IPD reconstruction from the Gram matrix.

\begin{figure}[H]
    \centering
    \includegraphics[width=0.9\textwidth]{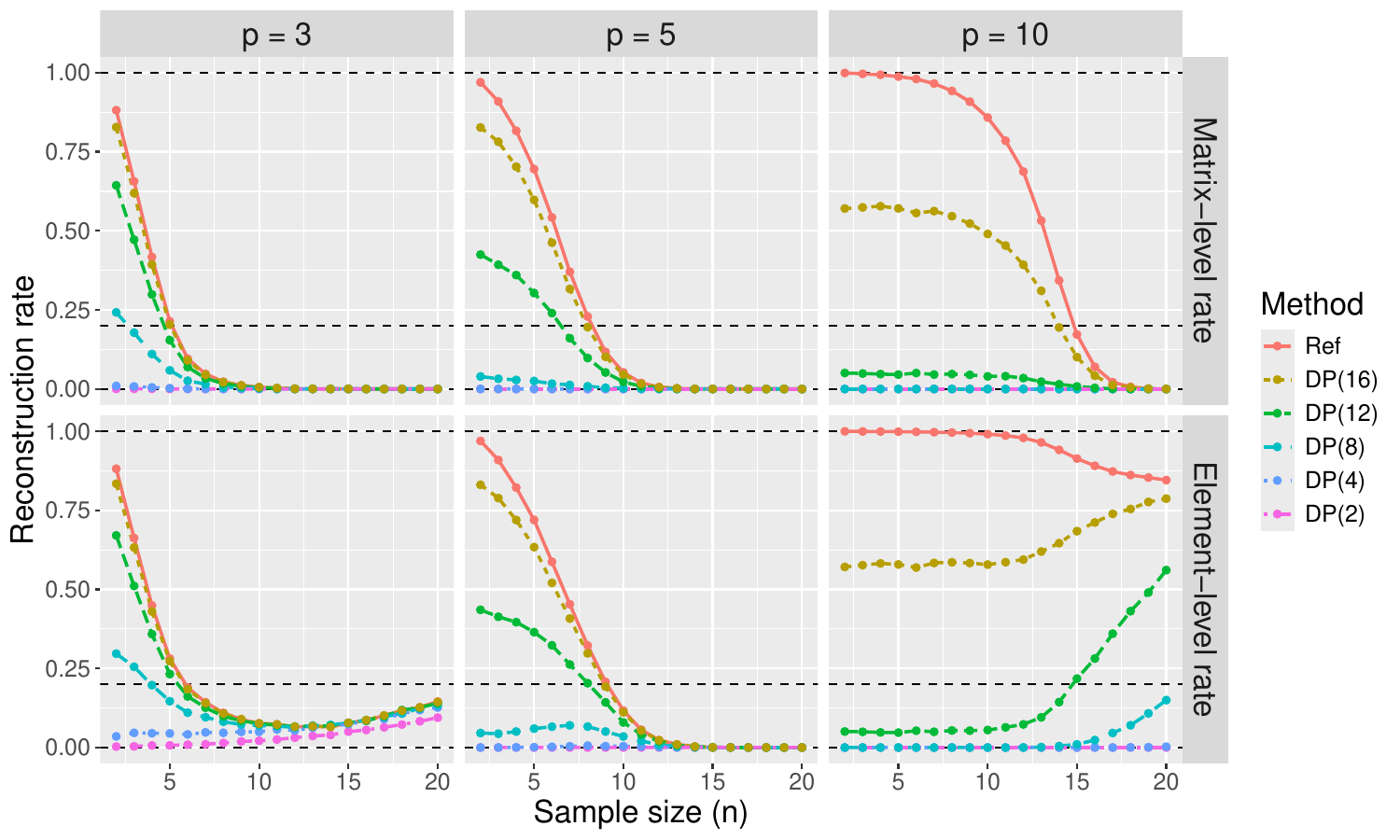}
    \caption{Matrix- and element-level reconstruction rates under various privacy settings. Ref corresponds to the case without Gaussian perturbation and DP(x) corresponds to the case of Gaussian perturbation with $\varepsilon_0=x$.}
    \label{fig-dp-check}
\end{figure}

\subsection{Comparison of estimation performance}
\label{sec-sim-estimation}

\paragraph{Simulation configuration.}
The extent to which the proposed method and existing approaches suffered from performance degradation were compared with analyses based on IPD.
Two data-generation mechanisms were considered:
\begin{enumerate}
    \item[(1)] Random-intercept model
    \begin{align*}
        y_{ki} &= \beta_0 + \sum_{m=1}^6 \beta_m x_{mki} + b_k + \epsilon_{ki}, \\
        b_k &\sim N(0, \tau^2), \quad \epsilon_{ki} \sim N(0, \sigma^2).
    \end{align*}

    \item[(2)] Random-intercept and slope model
    \begin{align*}
        y_{ki} &= \beta_0 + \sum_{m=1}^6 \beta_m x_{mki} + b_{0k} + b_{1k} x_{1ki} + \epsilon_{ki}, \\
        \begin{pmatrix}
            b_{0k} \\
            b_{1k}
        \end{pmatrix}
        &\sim
        N\!\left(
            \begin{pmatrix}
                0 \\
                0
            \end{pmatrix},
            \begin{pmatrix}
                \tau^2 & 0 \\
                0 & \tau^2
            \end{pmatrix}
        \right), \quad
        \epsilon_{ki} \sim N(0, \sigma^2).
    \end{align*}
\end{enumerate}

Among the six covariates, four were binary and two were continuous.
Specifically, $x_{mki} \sim \text{Bernoulli}(p_m)$ for $m=1,3,4,5$ with $(p_1,p_3,p_4,p_5)=(0.5,0.3,0.7,0.5)$, and $x_{mki} \sim N(0,v_m^2)$ for $m=2,6$ with $(v_2,v_6)=(1,0.5)$.
All covariates were generated independently.
The true regression coefficients were
$
(\beta_0, \beta_1, \beta_2, \beta_3, \beta_4, \beta_5, \beta_6)=(1, 0.5, 0.5, -1, -0.5, 1, -1)$,
with $\sigma^2 = 1$ and $\tau^2 = 1$.
The number of sites was set as $K=20, 50, 100, 200$.
To reflect realistic heterogeneity in multicenter studies, within-site sample sizes were generated from a mixture distribution. 
With a probability of 0.8, a within-site sample size was sampled uniformly from integers 2--10; otherwise, it was sampled uniformly from 50--100. 
Thus, 80\% of sites were small (sizes 2--10) and 20\% were large (sizes 50--100).
This design induced substantial imbalance across sites and allowed for the evaluation of robustness under heterogeneous cluster sizes.

For each data-generation mechanism, both correctly specified and mis-specified models were considered.
In the mis-specified setting, only two covariates $(x_{1ki}, x_{2ki})$ were included in the analysis model.
This setting mimiked underfitting owing to the omission of relevant covariates and allowed for the assessment of the robustness to model misspecification.
The proposed method adopts a simple random-intercept model as the working covariance structure. Therefore, when the data-generation mechanism followed a random-intercept and slope model, the working covariance was mis-specified.
Combining the two data-generation processes and the two analysis models yielded four scenarios in total.
To ensure numerical stability, $(\X,\Y)$ were standardized prior to the estimation and subsequently transformed back to the original scale.
The proposed DP-based estimation was applied to all the covariates (DP) and three privacy-sensitive covariates $x_4, x_5, x_6$ (DP2).
The dimension-adjusted privacy calibration described in Section~\ref{sec-sim-dp-check} was used throughout, with $\delta=1/N$ and $\varepsilon_0 \in \{1,2,4,6,8,10,12,16,20\}$.

\paragraph{Evaluation metrics}

For each scenario, 10,000 simulation replicates were conducted.
The following performance measures were computed:
\begin{align*}
    L_2 \text{ error from } \bbe_0 &= \| \hat{\bbe} - \bbe_0 \|, \\
    L_2 \text{ privacy cost} &= \| \hat{\bbe}_{DP} - \hat{\bbe}_{IPD} \|, \\
    \text{SE inflation factor} &= 
    \frac{\| \text{SE}(\hat{\bbe}_{DP})\|}{\|\text{SE}(\hat{\bbe}_{IPD})\|}, \\
    \text{SE calibration ratio}
    &=
    \frac{E\!\left[\widehat{\text{SE}}(\hat{\bbe})\right]}{\text{SD}(\hat{\bbe})}.
\end{align*}
The first measure evaluated the overall estimation accuracy relative to the true parameter.
The second measure isolated the additional estimation loss induced by DP.
The standard error (SE) inflation factor quantified the increase in uncertainty under privacy protection.
The SE calibration ratio indicated whether the estimated SEs were properly calibrated.
For both the SE inflation factor and SE calibration ratio, values greater than one indicated overestimation, whereas values less than one indicated underestimation.

\paragraph{Simulation results}

Figure~\ref{fig-sim-estimation} and Table~\ref{tab-sim-estimation} shows the results of the correctly specified random-intercept model. Results for the remaining scenarios are provided in Web Appendix C. Consistent with the theoretical results, the estimation performance improved as $\varepsilon_0$ and the number of sites $K$ increased. Both the $L_2$ error from $\bbe_0$ and the $L_2$ privacy cost decreased with a larger $\varepsilon_0$, indicating convergence toward the IPD-based estimator. Similarly, the privacy-induced estimation loss decreased as $K$ increased, which is consistent with the theoretical $O(K^{-1})$ decay in the privacy cost.

The SE inflation factor increased as $\varepsilon_0$ decreased. In particular, when the number of sites $K$ is small, strong privacy protection can inflate the SEs substantially. This effect was also reflected in the SE calibration ratio, indicating that reliable inference becomes difficult under stringent privacy settings.
This behavior can be explained by the fact that the DP mechanism adds noise directly to the summary statistics, with magnitude determined by the global sensitivity and the privacy parameter $\varepsilon_0$, as characterized in the Gaussian differential privacy framework \citep{dong2022gaussian}. When $K$ was small, this additional noise was not sufficiently averaged out across sites and therefore dominated the variability of the estimator. In such situations, a weaker privacy level (i.e., larger $\varepsilon_0$) may be required for maintaining reasonable estimation accuracy.

Under model mis-specification, the SE inflation factor was slightly smaller because the baseline SEs were already inflated owing to omitted covariates. Similar patterns were observed for the random-intercept and slope model. Robustness across random-effects structures can be attributed to the use of cluster-robust variance estimation, which preserves consistency when the number of sites is sufficiently large. Overall, the degradation in the estimation performance due to DP depends primarily on the privacy level $\varepsilon_0$ and number of sites $K$, while the impact of random-effects structure and model mis-specification is comparatively minor. These findings suggest that the proposed approach is most effective in settings with a sufficiently large number of sites, where the impact of DP noise can be mitigated through aggregation.

\begin{figure}[H]
    \centering
    \includegraphics[width=1.0\textwidth, page=1]{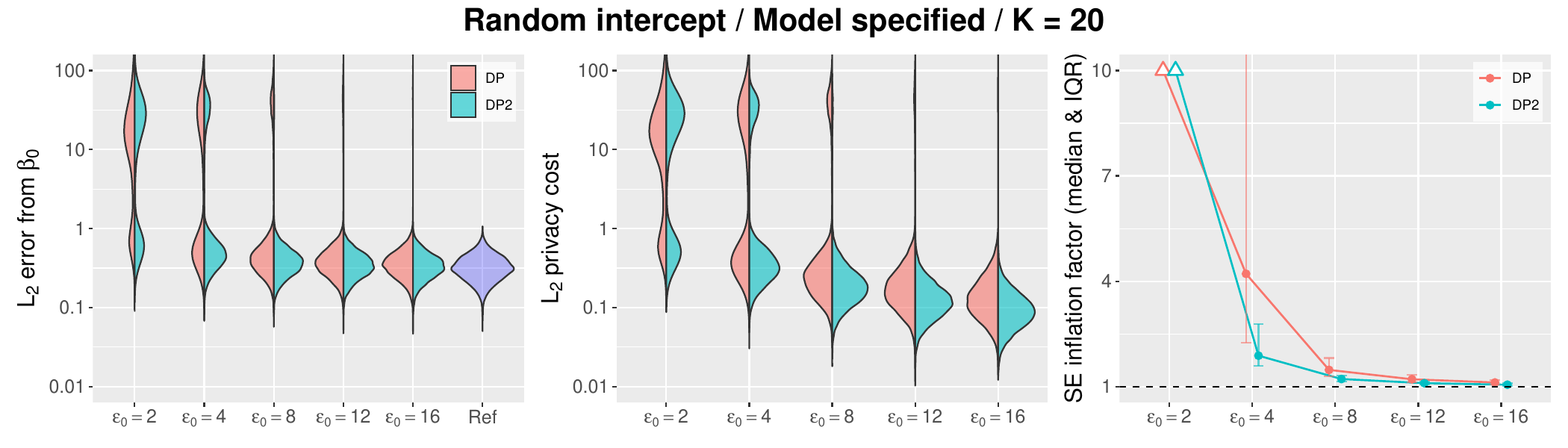}
    \includegraphics[width=1.0\textwidth, page=2]{main-figure/result-plot-all.pdf}
    \includegraphics[width=1.0\textwidth, page=3]{main-figure/result-plot-all.pdf}
    \includegraphics[width=1.0\textwidth, page=4]{main-figure/result-plot-all.pdf}
    \caption{Estimation performance results. Triangles indicate median SE inflation factors exceeding the plotting range; inter-quantile range (IQRs) are omitted in those cases.}
    \label{fig-sim-estimation}
\end{figure}

\begin{table}[H]
\centering
\caption{SE calibration ratio for $x_1$. The scenario is the correctly specified random-intercept model. CR0 denotes the CR0 cluster-robust variance estimator, and CR1p denotes the CR1p variance estimator with a small-sample correction, which are defined in Section \ref{sec-robust-variance}.}
\label{tab-sim-estimation}
\centering
\fontsize{9}{11}\selectfont
\begin{tabular}[t]{cc|ccc|ccc}
\toprule
& & \multicolumn{3}{c}{CR0} & \multicolumn{3}{c}{CR1p} \\
\cmidrule(lr){3-5} \cmidrule(lr){6-8}
$K$ & $\varepsilon_0$ & IPD & DP & DP2 & IPD & DP & DP2 \\
\midrule
20 & 2 & 0.86 & 1745.41 & 4444.54 & 1.23 & 2493.45 & 6349.34\\
20 & 4 & 0.86 & 1348.82 & 1463.55 & 1.23 & 1926.88 & 2090.79\\
20 & 8 & 0.86 & 1205.85 & 34.39 & 1.23 & 1722.64 & 49.13\\
20 & 12 & 0.86 & 308.16 & 5.82 & 1.23 & 440.23 & 8.32\\
20 & 16 & 0.86 & 49.15 & 0.86 & 1.23 & 70.21 & 1.23\\
\hline
50 & 2 & 0.95 & 3217.91 & 1724.02 & 1.08 & 3656.72 & 1959.11\\
50 & 4 & 0.95 & 1556.33 & 111.96 & 1.08 & 1768.56 & 127.23\\
50 & 8 & 0.95 & 297.79 & 0.98 & 1.08 & 338.39 & 1.11\\
50 & 12 & 0.95 & 19.61 & 0.95 & 1.08 & 22.29 & 1.08\\
50 & 16 & 0.95 & 1.01 & 0.95 & 1.08 & 1.14 & 1.08\\
\hline
100 & 2 & 0.96 & 2690.96 & 551.15 & 1.03 & 2862.72 & 586.33\\
100 & 4 & 0.96 & 1094.71 & 11.20 & 1.03 & 1164.59 & 11.91\\
100 & 8 & 0.96 & 33.21 & 0.97 & 1.03 & 35.33 & 1.03\\
100 & 12 & 0.96 & 0.97 & 0.97 & 1.03 & 1.03 & 1.03\\
100 & 16 & 0.96 & 0.97 & 0.97 & 1.03 & 1.03 & 1.03\\
\hline
200 & 2 & 0.99 & 2264.70 & 139.16 & 1.02 & 2334.75 & 143.47\\
200 & 4 & 0.99 & 408.80 & 1.36 & 1.02 & 421.44 & 1.40\\
200 & 8 & 0.99 & 0.99 & 0.99 & 1.02 & 1.02 & 1.03\\
200 & 12 & 0.99 & 0.99 & 0.99 & 1.02 & 1.02 & 1.02\\
200 & 16 & 0.99 & 0.99 & 0.99 & 1.02 & 1.02 & 1.02\\
\bottomrule
\end{tabular}
\end{table}

\section{Application using multi-site clinical data}
\label{sec-application}

A publicly available dataset of COVID-19 testing results from the Children's Hospital of Pennsylvania (CHOP), available in the R package \texttt{medicaldata} \citep{higgins2021medicaldata}, was analyzed in this study. 
The dataset contained 15,524 de-identified patient records from 88 clinics during the early phase of the COVID-19 pandemic in 2020. 
The polymerase chain reaction (PCR) cycle threshold values were used as responses.
To facilitate a comparison with prior work \citep{limpoco2025linear}, an LMM was fit, including sex, age, drive-through testing status, and age-by-sex interaction as fixed effects.

As IPD were available, the model was first fit using pooled observations and the resulting estimates were treated as a benchmark. 
The proposed summary-based procedure was then applied to DP. 
The privacy parameters were set as $\varepsilon_0 \in \{4,8,12,16\}$ with $\delta = 1/N$, based on preliminary simulation experiments. 
For each privacy level, the analysis was repeated 10,000 times using independently generated Gaussian noise, and the $L_2$ privacy cost and SE inflation factor were evaluated.

The $L_2$ privacy cost and SE inflation factor are summarized in Figure~\ref{fig-application}, and the corresponding quantiles for all $\varepsilon_0$ settings are reported in Appendix~D. 
Even under a stringent privacy setting (e.g., $\varepsilon_0 = 4$), which corresponded to a low reconstruction risk in the simulation study, the $L_2$ privacy cost remained below 0.05 and the SE inflation factor did not exceed 1.5. 
These results indicate that, for this multi-site clinical dataset, meaningful privacy guarantees can be achieved with only a modest loss of statistical efficiency.

\begin{figure}[H]
    \centering
    \includegraphics[width=0.9\textwidth]{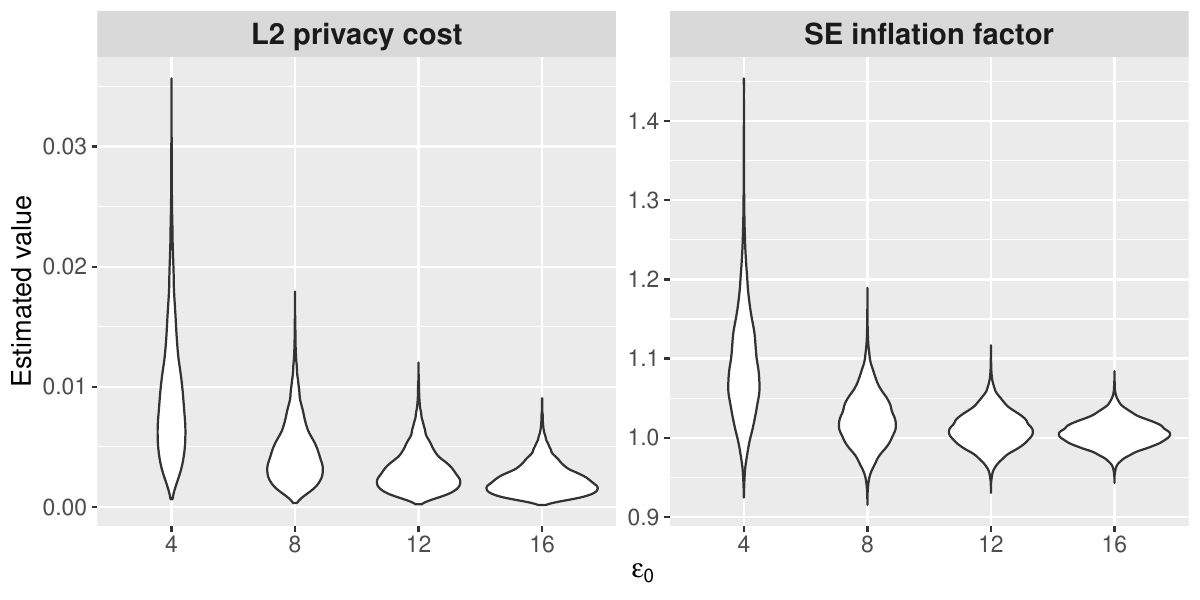}
    \caption{Results of the real data analysis.}
    \label{fig-application}
\end{figure}

\section{Conclusion}
\label{sec-conclusion}

This study developed a one-shot federated inference framework for LMMs, building on the exact likelihood reconstruction from site-level sufficient statistics and incorporating DP through Gaussian perturbation. The results showed that in the absence of privacy protection, sharing site-level quadratic summaries (e.g., Gram-type matrices) can entail a non-negligible reconstruction risk when within-site sample sizes are small and the covariates are discrete. To mitigate this risk while retaining the likelihood-based estimation, a perturbation mechanism was proposed and applied to the sufficient summary statistics required to reconstruct the pooled likelihood. The resulting estimator remained one-shot and communication-efficient, yet provided formal $(\varepsilon,\delta)$-DP protection against reconstruction.

From an inferential perspective, it was established that the proposed approach supports valid multi-site inference without requiring a large within-site sample size. In particular, the consistency and asymptotic normality of the privacy-preserving estimator was proven under a multi-site asymptotic regime, and a closed-form characterization of the leading-order statistical cost of privacy was derived. In addition, cluster-robust (sandwich) variance estimators were introduced that are computable solely from the perturbed summaries. Robust variance estimators enable reliable uncertainty quantification even under covariance mis-specification. Simulation studies confirmed the central trade-off: adding even modest noise substantially reduced reconstruction success, whereas the efficiency loss was governed by $(\varepsilon,\delta)$ and diminished as the number of sites increased. An application to multi-site COVID-19 testing data further illustrated that practical privacy levels can be achieved with limited degradation in point estimation and SEs.

However, several areas merit future research. First, improved privacy accounting and calibration may reduce noise, while maintaining comparable protection. Second, extending the framework beyond the working random-intercept structure would broaden its applicability; however, this may require additional summaries or approximations. Third, developing principled guidance for selecting $(\varepsilon,\delta)$ in clinical multi-site settings, which is potentially informed by reconstruction risk metrics and study-specific utility targets, would further support the deployment. Overall, the proposed method provides a statistically efficient and formally private approach to one-shot federated inference for LMMs in settings in which many sites contribute small samples.

\section*{Supplementary materials}
All the codes related to this paper can be accessed from our Github repository (\url{https://github.com/keisuke-hanada/dp-one-shot-lmm}).
\vspace*{-8pt}

\section*{Acknowledgements}
This work was supported by JSPS KAKENHI Grant Number JP24K23862.
\vspace*{-8pt}

\bibliography{main.bib}
\bibliographystyle{abbrvnat}

\appendix

\section{Formulation of the 0--1 integer programming problem}
\label{app-mip}

Let $\X_k \in \{0,1\}^{n_k \times p}$ be a binary matrix, where the rows correspond to individuals and the columns are binary covariates.
Suppose that only the Gram matrix $(\S_k)_{\X\X} \in \mathbb{Z}_{\ge 0}^{p\times p}$ is observed, where $\mathbb{Z}_{\ge 0}$ denotes the set of nonnegative integers.
Consider the problem of reconstructing $\X_k$ from $(\S_k)_{\X\X}$.

Let $\mathcal{U} \subseteq \{0,1\}^p$ be a set of candidate row patterns. These patterns are collected as rows of a matrix $U \in \{0,1\}^{R\times p}$, where $R = |\mathcal{U}|$ and $U_{r\cdot}$ denotes the $r$-th pattern.
The binary assignment variables are introduced:
\[
  a_{i,r} \in \{0,1\}, \qquad i=1,\dots,n_k,\ \ r=1,\dots,R,
\]
and are interpreted as ``row $i$ of $\X_k$ equals pattern $U_{r\cdot}$.'' Then, $\X_k$ can be represented as follows:
\[
  (\X_k)_{i\cdot} \;=\; \sum_{r=1}^R a_{i,r}\, U_{r\cdot}, \qquad i=1,\dots,n_k.
\]

The reconstruction problem can be formulated as the following $0$--$1$ mixed-integer linear feasibility problem:
\begin{align}
\label{eq:recon-vars}
  \text{find } a_{i,r} \in \{0,1\} ~~\text{subject to } \quad
  &\sum_{r=1}^R a_{i,r} = 1, \quad i = 1,\dots,n_k, \\
  &\sum_{i=1}^{n_k} \sum_{r=1}^R a_{i,r}\, U_{rj} U_{rk}
  = \bigl((\S_k)_{\X\X}\bigr)_{jk}, \quad 1 \le j \le k \le p. \nonumber
\end{align}
No objective function is required (pure feasibility); for example, any mixed-integer programming solver may minimize $0$.

\paragraph{Example: Reconstruction risk in multi-site PCR data.}

An individual-level dataset containing PCR test results and covariates is publicly available from the Children's Hospital of Philadelphia (CHOP) COVID-19 testing data.
Among the 70 sites in this dataset, 25\% had sample size of at most four.
As a concrete illustration, consider the site \texttt{cardiology}, which contained only $n_k=3$ individuals.

The Gram matrix $\X^\top \X$ was constructed from three binary variables: test result (positive/negative), sex (male/female), and whether the specimen was collected via drive-through (yes/no), and then $\X^\top \X$ was shared with the central server.
At this site, the shared Gram matrix was:
\begin{align*}
\bordermatrix{
      & \text{Result} & \text{Sex} & \text{Drive-thru} \cr
\text{Result}    & 1 & 0 & 0 \cr
\text{Sex}    & 0 & 1 & 0 \cr
\text{Drive-thru}& 0 & 0 & 0 \cr
}
\end{align*}

From this matrix, the original covariate matrix can be uniqualy constructed by solving a $0$--$1$ integer programming problem \eqref{eq:recon-vars}.
Indeed, the diagonal entries imply that there was exactly one positive test result, exactly one male individual, and no drive-through specimens.
Using these constraints, an exhaustive search showed that the covariate matrix is uniquely determined up to a permutation of rows, yielding
\begin{align*}
\bordermatrix{
      & \text{Result} & \text{Sex} & \text{Drive-thru} \cr
      & 0 & 0 & 0 \cr
      & 1 & 0 & 0 \cr
      & 0 & 1 & 0 \cr
}
\end{align*}
Consequently, the shared Gram matrix revealed that no male tested positive, whereas exactly one female tested positive.
This example demonstrates that, when within-site sample sizes are small, sharing only quadratic summaries may still pose a non-negligible disclosure risk.
Therefore, methods that (i) share summaries in a form that prevents reliable reconstruction and (ii) still enable statistically valid inferences are needed.

\section{Proofs}
\label{app-proofs}

\subsection{Proof of Theorem \ref{th-consistent-non-dp}}

\begin{proof}[Proof of Theorem \ref{th-consistent-non-dp}]
The ML log-likelihood can be written as follows:
\begin{align*}
    l_{ML}(\bbe, \Psi)
    &= -\frac{1}{2} \left\{ \log |\Sigma| +  (\y - \X \bbe)^\top \Sigma^{-1} (\y - \X \bbe) \right\}.
\end{align*}
where,
\begin{align*}
    \log |\Sigma|
    &= \sum_{k=1}^K \log (\sigma^2)^{n_k-1} (\sigma^2 + n_k \tau^2).
\end{align*}
Focusing on the quadratic term yields:
\begin{align*}
    (\y - \X \bbe)^\top \Sigma^{-1} (\y - \X \bbe)
    &= \sum_{k=1}^K (\y_k - \X_k \bbe)^\top \Sigma_k^{-1} (\y_k - \X_k \bbe).
\end{align*}
Moreover,
\begin{align*}
    &(\y_k - \X_k \bbe)^\top \Sigma_k^{-1} (\y_k - \X_k \bbe) \\
    &= \frac{1}{\sigma^2} \left\{ \y_k^\top \y_k - \y_k^\top \X_k \bbe - \bbe^\top \X_k^\top \y_k + \bbe^\top \X_k^\top \X_k \bbe \right\} \\
    &\quad - \frac{\tau^2}{\sigma^2(\sigma^2+n_k \tau^2)} \left\{\y_k^\top \mathbf{1}_{n_k} \mathbf{1}_{n_k}^\top \y_k
    - \y_k^\top \mathbf{1}_{n_k} \mathbf{1}_{n_k}^\top \X_k \bbe
    - \bbe^\top \X_k^\top \mathbf{1}_{n_k} \mathbf{1}_{n_k}^\top \y_k
    + \bbe^\top \X_k^\top \mathbf{1}_{n_k} \mathbf{1}_{n_k}^\top \X_k \bbe \right\} \\
    &= \frac{1}{\sigma^2} \left\{ (\S_k)_{\y\y} - (\S_k)_{\y\X} \bbe - \bbe^\top (\S_k)_{\X\y} + \bbe^\top (\S_k)_{\X\X} \bbe \right\} \\
    &\quad - \frac{\tau^2}{\sigma^2(\sigma^2+n_k \tau^2)} \left\{(\T_k)_{\y\y} - (\T_k)_{\y\X} \bbe - \bbe^\top (\T_k)_{\X\y} + \bbe^\top (\T_k)_{\X\X} \bbe \right\} \\
    &= \frac{1}{\sigma^2}
    \begin{pmatrix}
        1 \\
        -\bbe
    \end{pmatrix}^\top
    \left\{ \S_k - \frac{\tau^2}{\sigma^2 + n_k \tau^2} \T_k \right\}
    \begin{pmatrix}
        1 \\
        -\bbe
    \end{pmatrix}.
\end{align*}
Therefore,
\begin{align*}
    l_{ML}(\bbe, \sigma^2, \tau^2)
    &= - \frac{1}{2} \sum_{k=1}^K \left[
    \log \bigl\{ (\sigma^2)^{n_k-1}(\sigma^2+n_k \tau^2) \bigr\}
    + \frac{1}{\sigma^2}
    \begin{pmatrix}
        1 \\
        -\bbe
    \end{pmatrix}^\top
    \left\{ \S_k - \frac{\tau^2}{\sigma^2 + n_k \tau^2} \T_k \right\}
    \begin{pmatrix}
        1 \\
        -\bbe
    \end{pmatrix}
    \right].
\end{align*}

The REML log-likelihood is given by:
\begin{align*}
    l_{REML} (\bbe, \sigma^2, \tau^2)
    &= l_{ML}(\bbe, \sigma^2, \tau^2) - \frac{1}{2} \log |\X^\top \Sigma^{-1} \X| \\
    &= l_{ML}(\bbe, \sigma^2, \tau^2) - \frac{1}{2} \log \left|\sum_{k=1}^K \X_k^\top \Sigma_k^{-1} \X_k\right| \\
    &= l_{ML}(\bbe, \sigma^2, \tau^2) - \frac{1}{2} \log \left|\sum_{k=1}^K \frac{1}{\sigma^2}\left\{ (\S_k)_{\X \X}
    - \frac{\tau^2}{\sigma^2+n_k \tau^2} (\T_k)_{\X \X} \right\}\right|.
\end{align*}
Hence, the REML log-likelihood can be expressed solely in terms of the summaries $\{\S_k, \T_k\}_{k=1}^K$.

Finally, the cluster-robust variance estimator $\hat{\V}_{\mathrm{CR0}}$ is expressed using only these summaries. Recall that:
\begin{align*}
    \hat{\V}_{\mathrm{CR0}}
    &= (\X^\top \Sigma^{-1} \X)^{-1} \left( \sum_{k=1}^K \X_k^\top \Sigma_k^{-1} \hat{\e}_k \hat{\e}_k^\top \Sigma_k^{-1} \X_k \right) (\X^\top \Sigma^{-1} \X)^{-1},
\end{align*}
where $\hat{\e}_k=\y_k - \X_k \hat{\bbe}$ \citep{liang1986longitudinal}.
The following is defined:
\begin{align*}
    \W_k &= \X_k^\top \Sigma_k^{-1} \X_k
    = \frac{1}{\sigma^2} (\S_k)_{\X \X}
    - \frac{\tau^2}{\sigma^2(\sigma^2+n_k \tau^2)} (\T_k)_{\X \X}, \\
    \Q_k &= \X_k^\top \Sigma_k^{-1} \y_k
    = \frac{1}{\sigma^2} (\S_k)_{\X \y}
    - \frac{\tau^2}{\sigma^2(\sigma^2+n_k \tau^2)} (\T_k)_{\X \y}, \\
    \P_k
    &= (\Q_k - \W_k \hat{\bbe})
       (\Q_k - \W_k \hat{\bbe})^\top.
\end{align*}
Then, $(\X^\top \Sigma^{-1} \X)^{-1} = \left(\sum_{k=1}^K \W_k\right)^{-1}$, and
\begin{align*}
    \X_k^\top \Sigma_k^{-1} \hat{\e}_k
    &= \X_k^\top \Sigma_k^{-1} (\y_k - \X_k \hat{\bbe})
    = \Q_k - \W_k \hat{\bbe},
\end{align*}
so that
\begin{align*}
    \X_k^\top \Sigma_k^{-1} \hat{\e}_k \hat{\e}_k^\top \Sigma_k^{-1} \X_k
    &= (\Q_k - \W_k \hat{\bbe}) (\Q_k - \W_k \hat{\bbe})^\top
    = \P_k.
\end{align*}
Therefore,
\begin{align*}
    \hat{\V}_{\mathrm{CR0}}
    &= \left(\sum_{k=1}^K \W_k\right)^{-1}
       \left( \sum_{k=1}^K \P_k \right)
       \left(\sum_{k=1}^K \W_k\right)^{-1},
\end{align*}
which completes the proof.
\end{proof}

\subsection{Proof of Proposition \ref{th-gaussian-mechanism-dp}}

\begin{lemma}[Gaussian mechanism; Theorem 1 in \cite{balle2018improving}]
\label{lem-gaussian-dp}
Let $f: \mathcal{X} \to \mathbb{R}^d$ be a deterministic query with global
$L_2$ sensitivity:
\[
\Delta_2 = \sup_{||x - x'||_2 \le r} \| f(x) - f(x') \|_2, \quad r \in (0,\infty).
\]
Then, the mechanism $\mathcal{M}(x)=f(x)+Z$,
$Z\sim N(0,\sigma^2 I_d)$,
is $(\varepsilon,\delta)$-differentially private for
$\sigma = \Delta_2 \sqrt{2\log(1.25/\delta)} / \varepsilon$.
\end{lemma}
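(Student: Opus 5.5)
The plan is the classical privacy-loss argument for the Gaussian mechanism (Dwork and Roth, Theorem~A.1), in the form restated by \cite{balle2018improving}. I will assume throughout that $\varepsilon\in(0,1)$ and $\delta\in(0,1)$. The calibration $\sigma=\Delta_2\sqrt{2\log(1.25/\delta)}/\varepsilon$ is only known to be sufficient in that range. For larger $\varepsilon$ one would instead invoke the analytic Gaussian mechanism of \cite{balle2018improving}, which calibrates $\sigma$ exactly through the Gaussian CDF.

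\textbf{Step 1: reduce to a one-dimensional privacy loss.} Fix adjacent inputs $x,x'$ and set $\v=f(x)-f(x')$, so $\|\v\|_2\le\Delta_2$. Let $p$ and $p'$ be the densities of $N(f(x),\sigma^2I_d)$ and $N(f(x'),\sigma^2I_d)$. The privacy loss is $L(z)=\log\{p(z)/p'(z)\}$. For $z=f(x)+Z$ a direct expansion gives
\begin{align*}
L(z)=\frac{\|\v\|_2^2+2\v^\top Z}{2\sigma^2}.
\end{align*}
Hence $L\sim N\bigl(\|\v\|_2^2/(2\sigma^2),\,\|\v\|_2^2/\sigma^2\bigr)$. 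In particular only the projection of $Z$ onto $\v$ matters, so the problem is one-dimensional by rotation invariance of the isotropic Gaussian.

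\textbf{Step 2: turn a tail bound into $(\varepsilon,\delta)$-DP.} Define the bad set $B=\{z: L(z)>\varepsilon\}$. For any measurable $\mathcal S$, split it as $\mathcal S\cap B^c$ and $\mathcal S\cap B$. On $\mathcal S\cap B^c$ we have $p\le e^{\varepsilon}p'$ pointwise, so
\begin{align*}
P(\mathcal M(x)\in\mathcal S)\le e^{\varepsilon}P(\mathcal M(x')\in\mathcal S)+P_{Z}(L>\varepsilon).
\end{align*}
It therefore suffices to show $P(L>\varepsilon)\le\delta$. By symmetry of the roles of $x$ and $x'$, this also covers the reverse inequality.

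\textbf{Step 3: the Gaussian tail calculation, which is the main obstacle.} Standardizing, $P(L>\varepsilon)=P\bigl(N(0,1)>\varepsilon\sigma/\|\v\|_2-\|\v\|_2/(2\sigma)\bigr)$. The threshold is decreasing in $\|\v\|_2$, so the worst case is $\|\v\|_2=\Delta_2$.

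Write $c=\sqrt{2\log(1.25/\delta)}$, so that $\sigma=c\Delta_2/\varepsilon$ and the threshold becomes $t=c-\varepsilon/(2c)$. I would apply the Mills-ratio bound $P(N(0,1)>t)\le e^{-t^2/2}/(t\sqrt{2\pi})$. Then I would use $t^2\ge c^2-\varepsilon$ to get $e^{-t^2/2}\le e^{\varepsilon/2}\,\delta/1.25$. Finally I would check that $e^{\varepsilon/2}/(1.25\,t\sqrt{2\pi})\le 1$, which holds because $t$ is bounded away from zero once $\varepsilon<1$ and $\delta$ is small.

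This is exactly where the restriction $\varepsilon<1$ and the constant $1.25$ enter. I expect this bookkeeping, making the inequality $e^{\varepsilon/2}\le 1.25\,t\sqrt{2\pi}$ hold uniformly over the admissible range of $\delta$, to be the only delicate point. I would try it first by bounding $t\ge c-1/2$ and using $c\ge\sqrt{2\log 1.25}$. If that lower bound on $c$ turns out to be too weak for $\delta$ close to $1$, I would fall back on Dwork and Roth's finer case analysis, which restricts the range of $\delta$. Once the tail bound is established, Steps 1--2 give the lemma.
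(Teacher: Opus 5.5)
The paper gives no proof of this lemma. It imports Theorem~1 of \cite{balle2018improving}, which restates the classical result of \cite{dwork2014algorithmic}. Your privacy-loss argument is the standard proof behind that citation, and Steps~1--2 are correct. Bounding the upper tail $P(L>\varepsilon)$ is all Definition~\ref{def-dp} needs, and the reverse inequality follows by swapping $x$ and $x'$.

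Your extra hypothesis $\varepsilon\in(0,1)$ is more than a limitation of the known proof. It is part of the cited theorem, and the lemma as printed, which omits it, is false for large $\varepsilon$. Take a pair with $\|f(x)-f(x')\|_2=\Delta_2$, let $c=\sqrt{2\log(1.25/\delta)}$, and let $\Phi$ be the standard normal distribution function. Your set $B=\{L>\varepsilon\}$ then gives
\[
P\bigl(\mathcal M(x)\in B\bigr)-e^{\varepsilon}P\bigl(\mathcal M(x')\in B\bigr)
=\Phi\Bigl(\frac{\varepsilon}{2c}-c\Bigr)-e^{\varepsilon}\,\Phi\Bigl(-\frac{\varepsilon}{2c}-c\Bigr).
\]
This tends to $1$ as $\varepsilon\to\infty$ with $\delta$ fixed. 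It already exceeds $\delta$ at $\delta=0.01$, $\varepsilon=8$, where it is about $0.018$. So keep the restriction. It also means the lemma cannot by itself justify the paper's later calibration $\varepsilon=2p\varepsilon_0\ge 6$.

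The one step that does not yet close is the one you flagged, and neither of your suggestions closes it. The bound $t\ge c-\tfrac12$ requires $c\ge\varepsilon$, so it fails when $c<1$. For $c=0.7$ and $\varepsilon=0.99$ one has $t<0$, where the Mills-ratio bound is not even available. Even where the bound is valid, $c\ge\sqrt{2\log 1.25}\approx 0.668$ gives only $t\ge 0.17$, short of the required $e^{1/2}/(1.25\sqrt{2\pi})\approx 0.526$. Falling back on Dwork--Roth narrows the range of $\delta$ rather than covering the missing part.

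A two-case split finishes the proof for all $\varepsilon,\delta\in(0,1)$. Use $t=c-\varepsilon/(2c)>c-1/(2c)$, whose right side is increasing in $c$.
\begin{itemize}
\item If $c\ge 1.02$, then $t>0.529$, and your chain $P(L>\varepsilon)\le \delta\,e^{\varepsilon/2}/(1.25\,t\sqrt{2\pi})<\delta$ goes through.
\item If $c<1.02$, then $\delta=1.25e^{-c^2/2}>0.74$. Since $c>0.668$, also $t>-0.09$, so $P(L>\varepsilon)\le P(N(0,1)>-0.09)<0.54<\delta$ directly.
\end{itemize}
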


\begin{proof}[Proof of Proposition~\ref{th-gaussian-mechanism-dp}]
Let $Q : \mathcal{X} \to \mathbb{R}^{m\times n}$ be a matrix-valued query with Frobenius global sensitivity:
\[
\Delta_F = \sup_{\text{adjacent }\D,\D'} \|Q(\D)-Q(\D')\|_F .
\]
As $\mathbb{R}^{m\times n}$ is isomorphic to $\mathbb{R}^{mn}$ via the vectorization operator $\mathrm{vec}(\cdot)$, it follows that:
\[
\|\A\|_F = \|\mathrm{vec}(\A)\|_2
\]
for any matrix $\A$.
The vector-valued query is defined as $f(\D)=\mathrm{vec}(Q(\D)) \in \mathbb{R}^{mn}$.
Then, its $L_2$ global sensitivity satisfies
\[
\Delta_2
= \sup_{\text{adjacent }\D,\D'} \|f(\D)-f(\D')\|_2
= \sup_{\text{adjacent }\D,\D'} \|Q(\D)-Q(\D')\|_F
= \Delta_F .
\]
Applying Lemma~\ref{lem-gaussian-dp} to $f$ implies that the mechanism:
\[
\mathcal M_g(\D)
= Q(\D)+\U,
\quad
U_{ij}\sim \mathcal N(0,\sigma_{DP}^2),
\]
with
\[
\sigma_{DP}
= \frac{\Delta_F\sqrt{2\log(1.25/\delta)}}{\varepsilon}
\]
is $(\varepsilon,\delta)$-differentially private.
\end{proof}

\subsection{Proof of Theorem \ref{th-consistent-dp}}

The proof relies on the consistency of M-estimators.

\begin{lemma}[Theorem 5.7 in \cite{van2000asymptotic}]
\label{lem-consistency-M-estimator}
Let $M_n$ be random functions and let $M$ be a fixed function of $\bth$ such that for every $\varepsilon > 0$:
\begin{align}
\label{eq-sup-Mn}
\sup_{\bth \in \Theta} \left| M_n(\bth) - M(\bth) \right| \xrightarrow{P} 0, \\
\label{eq-Mth-lt-Mth0}
\sup_{\bth: d(\bth, \bth_0) \ge \varepsilon} M(\bth) < M(\bth_0).
\end{align}
Then, any sequence of estimators $\hat{\bth}_n$ that satisfies $M_n(\hat{\bth}_n) \ge M_n(\bth_0) - o_P(1)$ converges in probability to $\bth_0$.
\end{lemma}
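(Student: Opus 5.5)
The statement just given is Lemma~\ref{lem-consistency-M-estimator}, the classical argmax consistency theorem (Theorem 5.7 in \cite{van2000asymptotic}). I would prove it directly by showing that, for each fixed $\varepsilon>0$, the event $d(\hat{\bth}_n,\bth_0)\ge\varepsilon$ forces a gap of fixed positive size in the population criterion $M$. That gap can only occur with vanishing probability because of \eqref{eq-sup-Mn}.

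\textbf{Step 1.} Fix $\varepsilon>0$ and define
\[
\eta = M(\bth_0) - \sup_{\bth: d(\bth,\bth_0)\ge\varepsilon} M(\bth).
\]
By the well-separation condition \eqref{eq-Mth-lt-Mth0}, $\eta>0$. On the event $d(\hat{\bth}_n,\bth_0)\ge\varepsilon$ we then have $M(\hat{\bth}_n)\le M(\bth_0)-\eta$. It therefore suffices to show that $M(\bth_0)-M(\hat{\bth}_n) \le o_P(1)$. Then $P\{d(\hat{\bth}_n,\bth_0)\ge\varepsilon\}\le P\{M(\bth_0)-M(\hat{\bth}_n)\ge \eta\}\to 0$.

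\textbf{Step 2.} Write
\[
M(\bth_0)-M(\hat{\bth}_n) = \{M(\bth_0)-M_n(\bth_0)\} + \{M_n(\bth_0)-M_n(\hat{\bth}_n)\} + \{M_n(\hat{\bth}_n)-M(\hat{\bth}_n)\}.
\]
The middle term is at most $o_P(1)$ by the near-maximizer hypothesis $M_n(\hat{\bth}_n)\ge M_n(\bth_0)-o_P(1)$. The first and third terms are each bounded in absolute value by $\sup_{\bth\in\Theta}|M_n(\bth)-M(\bth)|$, which is $o_P(1)$ by \eqref{eq-sup-Mn}. Summing gives $M(\bth_0)-M(\hat{\bth}_n)\le o_P(1)$, and Step 1 concludes the proof.

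\textbf{Main obstacle.} The argument is short, and the only delicate point is measurability. The supremum in \eqref{eq-sup-Mn} and the event $\{d(\hat{\bth}_n,\bth_0)\ge\varepsilon\}$ need not be measurable. I would handle this as van der Vaart does, by interpreting probabilities as outer probabilities; alternatively, one can note that in our application $\Theta$ is compact by (A1) and $m_{\bth}$ is continuous in $\bth$, so the suprema can be taken over a countable dense subset. For the later use in Theorem~\ref{th-consistent-dp}, \eqref{eq-sup-Mn} will be verified with $M_n=K^{-1}l_{DP}$. This works because $l_{DP}-l_{ML}$ is linear in the noise matrices $\U_{k1},\U_{k2}$, with coefficients that are bounded uniformly on $\Theta$ (since $\sigma^2\ge\underline\sigma^2$). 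Its $K^{-1}$-average is then $o_P(1)$ uniformly on $\Theta$ by the weak law of large numbers under (A4), and (A3) handles the $l_{ML}$ part.
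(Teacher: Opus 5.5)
Your proof is correct and is exactly the standard argument behind Theorem~5.7 of \cite{van2000asymptotic}: the well-separation gap $\eta$, combined with the three-term decomposition of $M(\bth_0)-M(\hat{\bth}_n)$. The paper does not prove the lemma itself; it imports it by citation and applies it with $M_K=K^{-1}l_{DP}$, as you anticipate.

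One caution on your closing remark, which lies outside the lemma. The coefficient $\tau^2/(\sigma^2+n_k\tau^2)$ multiplying $\U_{k2}$ changes with $k$, so uniform boundedness does not let you factor it out of the average. Going from the pointwise weak law to uniformity over $\Theta$ therefore needs an extra step. One option is a Lipschitz bound in $(\sigma^2,\tau^2)$ whose constant is a multiple of $K^{-1}\sum_k|(\U_{k2})_{rs}|=O_P(1)$. Another, when the $n_k$ are bounded, is to group sites by the finitely many values of $n_k$.
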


\begin{proof}[Proof of Theorem \ref{th-consistent-dp}]
By Lemma~\ref{lem-consistency-M-estimator}, it suffices to verify \eqref{eq-sup-Mn} and \eqref{eq-Mth-lt-Mth0} for $M_K = K^{-1} l_{DP}$.
Under assumptions (A1) and (A2), \eqref{eq-Mth-lt-Mth0} holds immediately.
Thus, \eqref{eq-sup-Mn} is proved.

Let $M_K^{ML}(\bth) = K^{-1} l_{ML}(\bbe, \sigma^2, \tau^2; \{\S_k, \T_k\}_{k=1}^K)$.
Then
\begin{align}
\label{eq-sup-ineq-Mk}
    \sup_{\bth \in \Theta} \left| M_K(\bth) - M(\bth) \right|
    &\le \sup_{\bth \in \Theta} \left| M_K(\bth) - M_K^{ML}(\bth) \right|
    + \sup_{\bth \in \Theta} \left| M_K^{ML}(\bth) - M(\bth) \right|.
\end{align}
By assumption (A3), the second term on the right-hand side of \eqref{eq-sup-ineq-Mk} converges to zero in probability.
Therefore, the first term is bound.

For any $\bth=(\bbe, \sigma^2, \tau^2) \in \Theta$,
\begin{align*}
    M_K(\bth)
    &= \frac{1}{K} l_{DP}(\bbe, \sigma^2, \tau^2; \{\tilde{\S}_k, \tilde{\T}_k\}_{k=1}^K ) \\
    &= - \frac{1}{2K} \sum_{k=1}^K \left[
    \log (\sigma^2)^{n_k-1}(\sigma^2+n_k \tau^2)
    + \frac{1}{\sigma^2}
    \begin{pmatrix}
        1 \\
        -\bbe
    \end{pmatrix}^\top
    \left\{ \tilde{\S}_k - \frac{\tau^2}{\sigma^2 + n_k \tau^2} \tilde{\T}_k \right\}
    \begin{pmatrix}
        1 \\
        -\bbe
    \end{pmatrix}
    \right] \\
    &= \frac{1}{K} l_{ML}(\bbe, \sigma^2, \tau^2; \{\S_k, \T_k\}_{k=1}^K)
    - \frac{1}{2\sigma^2K} \sum_{k=1}^K
    \left[
    \begin{pmatrix}
        1 \\
        -\bbe
    \end{pmatrix}^\top
    \left\{ \U_{k1} - \frac{\tau^2}{\sigma^2 + n_k \tau^2} \U_{k2} \right\}
    \begin{pmatrix}
        1 \\
        -\bbe
    \end{pmatrix}
    \right] \\
    &= \frac{1}{K} l_{ML}(\bbe, \sigma^2, \tau^2; \{\S_k, \T_k\}_{k=1}^K)
    - \frac{1}{2\sigma^2}
    \begin{pmatrix}
        1 \\
        -\bbe
    \end{pmatrix}^\top
    \left[
    \frac{1}{K} \sum_{k=1}^K \left\{ \U_{k1} - \frac{\tau^2}{\sigma^2 + n_k \tau^2} \U_{k2} \right\}
    \right]
    \begin{pmatrix}
        1 \\
        -\bbe
    \end{pmatrix}.
\end{align*}
Let $\v_\bbe = (1~-\bbe^\top)^\top$ and $\A_k = \U_{k1} - \frac{\tau^2}{\sigma^2 + n_k \tau^2} \U_{k2}$.
Then,
\begin{align*}
    \sup_{\bth \in \Theta} \left| M_K(\bth) - M_K^{ML}(\bth) \right|
    &=
    \sup_{\bth \in \Theta} \left| \frac{1}{2\sigma^2K} \sum_{k=1}^K \v_\bbe^\top \A_k \v_\bbe \right| \\
    &\le
    \frac{1}{2\underline{\sigma}^2}
    \sum_{s=1}^{p+1} \sum_{r=1}^{p+1}
    \sup_{\bth \in \Theta}
    \left| \frac{1}{K} \sum_{k=1}^K (\v_\bbe)_s (\v_\bbe)_r (\A_k)_{rs} \right|.
\end{align*}
Owing to the compactness of $\Theta$ (A1), there exist constants $C_1,C_2<\infty$ such that
$|(\v_\bbe)_s (\v_\bbe)_r| \le C_1$ for $s,r=1,\dots,p+1$, and
$\tau^2/(\sigma^2+n_k \tau^2) \le \overline{\tau}^2/\underline{\sigma}^2 = C_2$ for $k=1,\dots,K$.
By assumption (A4), each element of $\U_{k1}$ and $\U_{k2}$ has mean zero and finite variance $\sigma_{DP}^2$. Hence, by the weak law of large numbers,
$\frac{1}{K} \sum_{k=1}^K (\U_{k\ell})_{rs} \xrightarrow{P} 0$ for $\ell=1,2$ as $K \to \infty$.
Therefore,
\begin{align*}
    \sup_{\bth \in \Theta} \left| M_K(\bth) - M_K^{ML}(\bth) \right|
    &\le
    \frac{1}{2\underline{\sigma}^2}
    \sum_{s=1}^{p+1} \sum_{r=1}^{p+1}
    C_1
    \left\{
    \left| \frac{1}{K} \sum_{k=1}^K (\U_{k1})_{rs} \right|
    + C_2 \left| \frac{1}{K} \sum_{k=1}^K (\U_{k2})_{rs} \right|
    \right\}
    \xrightarrow{P} 0.
\end{align*}
Consequently,
\[
\sup_{\bth \in \Theta} \left| M_K(\bth) - M(\bth) \right| \xrightarrow{P} 0.
\]

Finally, the estimator:
\[
\hat{\bth}_{DP} = \arg\max_{\bth \in \Theta} l_{DP}(\bth)
\]
satisfies $M_K(\hat{\bth}_{DP}) \ge M_K(\bth_0) \ge M_K(\bth_0)-o_P(1)$.
Thus, Lemma~\ref{lem-consistency-M-estimator} yields $\hat{\bth}_{DP} \xrightarrow{P} \bth_0$ as $K \to \infty$.
\end{proof}

\subsection{Proof of Theorem \ref{th-asymp-dp}}

The proof relies on the following lemma.

\begin{lemma}[Theorem 5.23 in \cite{van2000asymptotic}]
\label{lem-asymp-normal-m-est}
For each $\theta$ in an open subset of Euclidean space, let $x \mapsto m_{\theta}(x)$ be a measurable function such that $\theta \mapsto m_{\theta}(x)$ is differentiable at $\theta_0$ for $P$-almost every $x$ with derivative $\dot{m}_{\theta_0}(x)$ and such that, for every $\theta_1$ and $\theta_2$ in a neighborhood of $\theta_0$ and a measurable function $\dot{m}$ with $P \dot{m} < \infty$:
\[
\left| m_{\theta_1}(x) - m_{\theta_2}(x) \right| \le \dot{m}(x) \| \theta_1 - \theta_2 \|.
\]
Furthermore, assume that the map $\theta \mapsto Pm_{\theta}$ admits a second-order Taylor expansion at the point of maximum $\theta_0$ with a non-singular symmetric second-derivative matrix $V_{\theta_0}$. If $\mathbb{P}_n m_{\hat{\theta}_n} \ge \sup_{\theta} \mathbb{P}_n m_{\theta} - o_P(n^{-1})$ and $\hat{\theta}_n \xrightarrow{P} \theta_0$, then
\[
\sqrt{n} (\hat{\theta}_n - \theta_0) = - V_{\theta_0}^{-1} \frac{1}{\sqrt{n}} \sum_{i=1}^n \dot{m}_{\theta_0} (X_i) + o_P(1).
\]
In particular, $\sqrt{n} (\hat{\theta}_n - \theta_0)$ is asymptotically normal with mean zero and covariance matrix $V_{\theta_0}^{-1} P \dot{m}_{\theta_0} \dot{m}_{\theta_0}^\top V_{\theta_0}^{-1}$.
\end{lemma}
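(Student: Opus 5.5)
This lemma is quoted from \cite{van2000asymptotic}, so in the paper it can simply be cited. If I were to prove it, I would follow the classical three-step route: a local quadratic expansion of the criterion, a $\sqrt{n}$-rate for $\hat\theta_n$, and then a comparison of near-maximizers of two asymptotically equivalent quadratic processes.

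\textbf{Step 1 (local uniform expansion).} Write $\mathbb{G}_n=\sqrt{n}(\mathbb{P}_n-P)$. For $h$ in a bounded set, consider the rescaled differences
\[
f_{n,h}=\sqrt{n}\bigl(m_{\theta_0+h/\sqrt n}-m_{\theta_0}\bigr)-h^\top\dot m_{\theta_0}.
\]
The Lipschitz condition gives the envelope $|f_{n,h}|\le (\dot m+\|\dot m_{\theta_0}\|)\|h\|$, and I note that $\|\dot m_{\theta_0}\|\le\dot m$ holds automatically. Pointwise differentiability gives $f_{n,h}\to 0$ almost surely. Dominated convergence, using $P\dot m^2<\infty$ (which I read the condition $P\dot m<\infty$ to mean), then gives $Pf_{n,h}^2\to0$. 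The class $\{m_\theta:\|\theta-\theta_0\|<\eta\}$ is Lipschitz in a finite-dimensional parameter with a square-integrable envelope. Its bracketing numbers are therefore polynomial in $1/\varepsilon$, so it is Donsker. Asymptotic equicontinuity then yields $\sup_{\|h\|\le M}|\mathbb{G}_n f_{n,h}|=o_P(1)$. Combining this with the second-order Taylor expansion of $\theta\mapsto Pm_\theta$ at its maximum gives, uniformly on $\|h\|\le M$,
\[
n\mathbb{P}_n\bigl(m_{\theta_0+h/\sqrt n}-m_{\theta_0}\bigr)=h^\top\mathbb{G}_n\dot m_{\theta_0}+\tfrac12 h^\top V_{\theta_0}h+o_P(1).
\]

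\textbf{Step 2 ($\sqrt n$-consistency).} This is the main obstacle, because Step 1 only controls bounded $h$. I would use a peeling argument, as in the rate theorem (Thm 5.52 of van der Vaart).

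1. Since $\theta_0$ is a maximum and $V_{\theta_0}$ is nonsingular, $V_{\theta_0}$ is negative definite. Hence $P(m_\theta-m_{\theta_0})\lesssim-\|\theta-\theta_0\|^2$ near $\theta_0$.
2. The Lipschitz bound with the maximal inequality for bracketing gives $E^*\sup_{\|\theta-\theta_0\|<\delta}|\mathbb{G}_n(m_\theta-m_{\theta_0})|\lesssim\delta$.
3. Take $\phi_n(\delta)=\delta$ in the rate theorem. Together with consistency and the near-maximization condition $o_P(n^{-1})$, this gives $\|\hat\theta_n-\theta_0\|=O_P(n^{-1/2})$.

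\textbf{Step 3 (argmax comparison).} Set $\hat h_n=\sqrt n(\hat\theta_n-\theta_0)=O_P(1)$ and $\tilde h_n=-V_{\theta_0}^{-1}\mathbb{G}_n\dot m_{\theta_0}$, which is $O_P(1)$ by the CLT. Evaluate the expansion of Step 1 at both points. Near-maximality of $\hat h_n$ gives
\[
\tfrac12\hat h_n^\top V\hat h_n+\hat h_n^\top\mathbb{G}_n\dot m\ \ge\ \tfrac12\tilde h_n^\top V\tilde h_n+\tilde h_n^\top\mathbb{G}_n\dot m-o_P(1).
\]
Completing the square rewrites this as $-\tfrac12(\hat h_n-\tilde h_n)^\top V(\hat h_n-\tilde h_n)\ge -o_P(1)$. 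Negative definiteness of $V$ then forces $\hat h_n-\tilde h_n=o_P(1)$, which is the claimed linear representation. Asymptotic normality with covariance $V_{\theta_0}^{-1}P\dot m_{\theta_0}\dot m_{\theta_0}^\top V_{\theta_0}^{-1}$ follows from the multivariate CLT and Slutsky's lemma, using $P\dot m_{\theta_0}=0$. That identity holds because differentiating $Pm_\theta$ under the integral, which dominated convergence permits, shows $P\dot m_{\theta_0}$ equals the gradient of $Pm_\theta$ at its maximum.
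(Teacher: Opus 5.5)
The paper gives no proof of this lemma beyond the citation. Your outline is van der Vaart's own proof of his Theorem 5.23: the rate from his Corollary 5.53, the local expansion from his Lemma 19.31, then completing the square. You are right that the hypothesis must be $P\dot m^2<\infty$. That is what van der Vaart assumes, and it is needed for the CLT for $\mathbb{G}_n\dot m_{\theta_0}$, for the finiteness of $P\dot m_{\theta_0}\dot m_{\theta_0}^\top$, and for the $L_2$ bracketing bounds, so the paper's $P\dot m<\infty$ is a misquotation.

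Your Step 2 is fine, but Step 1 is not justified by what you cite. The Donsker property and asymptotic equicontinuity of the fixed class $\{m_\theta:\|\theta-\theta_0\|<\eta\}$ only give $\sup_{\|h\|\le M}|\mathbb{G}_n(m_{\theta_0+h/\sqrt n}-m_{\theta_0})|=o_P(1)$. You need $\sqrt n$ times this quantity, minus $h^\top\mathbb{G}_n\dot m_{\theta_0}$, to vanish, and the factor $\sqrt n$ in $f_{n,h}$ takes you outside any fixed Donsker class.

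What is missing is that the rescaled classes $\mathcal F_n=\{f_{n,h}:\|h\|\le M\}$ are Lipschitz in $h$ with a constant free of $n$: $|f_{n,h}-f_{n,h'}|\le 2\dot m\|h-h'\|$. Hence their bracketing numbers are bounded by $(C/\varepsilon)^d$ with $C$ independent of $n$. The increment class has envelope $2\delta\dot m$, so the bracketing maximal inequality gives $E^*\sup_{\|h-h'\|<\delta}|\mathbb{G}_n(f_{n,h}-f_{n,h'})|\to0$ as $\delta\to0$, uniformly in $n$. On a finite $\delta$-net of $\{\|h\|\le M\}$ you also have $\operatorname{Var}(\mathbb{G}_nf_{n,h})\le Pf_{n,h}^2\to0$ by your dominated-convergence step. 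Together these yield $\sup_{\|h\|\le M}|\mathbb{G}_nf_{n,h}|=o_P(1)$. This is precisely van der Vaart's Lemma 19.31, proved through his Theorem 19.28 for classes changing with $n$.

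In Step 3 the completed square has the wrong sign. Since $\mathbb{G}_n\dot m_{\theta_0}=-V_{\theta_0}\tilde h_n$, the difference of the two quadratic expressions in your display equals $+\tfrac12(\hat h_n-\tilde h_n)^\top V_{\theta_0}(\hat h_n-\tilde h_n)$. So what you actually obtain is $\tfrac12(\hat h_n-\tilde h_n)^\top V_{\theta_0}(\hat h_n-\tilde h_n)\ge-o_P(1)$. With your minus sign the left-hand side is nonnegative by negative definiteness, so the inequality holds trivially and forces nothing. With the correct sign, negative definiteness gives $\|\hat h_n-\tilde h_n\|^2=o_P(1)$. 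The rest of your argument (CLT, Slutsky, and $P\dot m_{\theta_0}=0$ by dominated convergence) then goes through.
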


\begin{proof}[Proof of Theorem \ref{th-asymp-dp}]
It is verified that Lemma~\ref{lem-asymp-normal-m-est} applies to the estimator sequence $\hat{\bth}_{DP}$.

By the compactness assumption (A1) and the analyticity of $m_{\bth}(\mathcal{D}_k)$, there exists a neighborhood $U(\bth_0) \subset \Theta$ such that
\[
\sup_{\bth \in U(\bth_0)} \|\dot{m}_{\bth}(\mathcal{D}_k)\| < \infty.
\]
Furthermore, assumption (A5)(i) implies that $P\|\dot m_{\bth_0}(\mathcal{D}_k)\|^2 < \infty$, hence
$\dot m(\mathcal{D}_k) = \sup_{\bth \in U(\bth_0)}\|\dot m_\bth(\mathcal D_k)\|$
is finite and $P$-integrable.
Using the mean-value theorem, for every $\bth_1,\bth_2 \in U(\bth_0)$:
\[
|m_{\bth_1}(\mathcal{D}_k) - m_{\bth_2}(\mathcal{D}_k)|
\le \dot{m}(\mathcal{D}_k)\, \|\bth_1 - \bth_2\|.
\]

By the identifiability assumption (A2), $\bth_0$ is a unique maximizer of $M(\bth)=P[m_{\bth}(\mathcal{D}_k)]$.
As $m_\bth(\mathcal{D}_k)$ is twice differentiable, $P m_\bth$ admits a second-order Taylor expansion around $\bth_0$.
Assumption (A5)(ii) ensures that the expected Hessian matrix
\[
V_{\bth_0} = P\left[ \ddot{m}_{\bth_0}(\mathcal{D}_k) \right]
\]
is finite and nonsingular.

According to the definition of the ML estimator:
\[
l_{DP}(\hat{\bth}_{DP}) \ge \sup_{\bth \in \Theta} l_{DP}(\bth) - o_P(1).
\]
Dividing by $K$ yields
\[
M_K(\hat{\bth}_{DP}) = K^{-1} l_{DP}(\hat{\bth}_{DP})
\ge \sup_{\bth \in \Theta} M_{K}(\bth) - o_P(K^{-1}).
\]
In addition, Theorem~\ref{th-consistent-dp} implies that $\hat{\bth}_{DP} \xrightarrow{P} \bth_0$.

Therefore, Lemma~\ref{lem-asymp-normal-m-est} yields:
\[
\sqrt{K} (\hat{\bth}_{DP} - \bth_0)
= - V_{\bth_0}^{-1} \frac{1}{\sqrt{K}} \sum_{k=1}^K \dot{m}_{\bth_0}(\mathcal{D}_k) + o_P(1).
\]
Moreover, $\sqrt{K} (\hat{\bth}_{DP} - \bth_0)$ is asymptotically normal with mean zero and covariance matrix $\V_{DP}$.

The consistency of the sandwich variance estimator $\hat{\V}_{DP}$ follows from the continuous mapping theorem. Specifically, the following are defined:
\begin{align*}
    \hat{\H}_K &= \frac{1}{K} \sum_{k=1}^K \tilde{\W}_k, \\
    \hat{\B}_K &= \frac{1}{K^2} \sum_{k=1}^K \left\{
    \tilde{\Q}_k \tilde{\Q}_k^\top
    - \tilde{\Q}_k \hat{\bbe}^\top \tilde{\W}_k
    - \tilde{\W}_k \hat{\bbe} \tilde{\Q}_k^\top
    + \tilde{\W}_k \hat{\bbe} \hat{\bbe}^\top \tilde{\W}_k
    \right\}.
\end{align*}
Then, $\hat{\H}_K \xrightarrow{P} (P[\ddot{m}_{\bth_0}(\mathcal{D}_k)])_{\bbe\bbe}$ and
$\hat{\B}_K \xrightarrow{P} (P[\dot{m}_{\bth_0}(\mathcal{D}_k)\dot{m}_{\bth_0}(\mathcal{D}_k)^\top])_{\bbe\bbe}$, and hence:
\[
\hat{\V}_{DP} = \hat{\H}_K^{-1} \hat{\B}_K \hat{\H}_K^{-1}
\xrightarrow{P} (\V_{DP})_{\bbe\bbe}.
\]
\end{proof}

\subsection{Proof of Proposition \ref{th-dp-loss}}

\begin{proof}[Proof of Proposition \ref{th-dp-loss}]
Let $\Delta \W_k = \tilde{\W}_k - \W_k$ and $\Delta \Q_k = \tilde{\Q}_k - \Q_k$. Then,
\begin{align*}
    \Delta \W_k &= \frac{1}{\sigma^2} (\U_{k1})_{\X \X}
    - \frac{\tau^2}{\sigma^2(\sigma^2+n_k \tau^2)} (\U_{k2})_{\X \X}, \\
    \Delta \Q_k &= \frac{1}{\sigma^2} (\U_{k1})_{\X \y}
    - \frac{\tau^2}{\sigma^2(\sigma^2+n_k \tau^2)} (\U_{k2})_{\X \y}.
\end{align*}
Using a Neumann series expansion:
\begin{align*}
    \tilde{\W}^{-1}
    &= \left( \W + \Delta \W \right)^{-1}
     = \W^{-1} - \W^{-1} \Delta \W \W^{-1} + R_{W,K}, \qquad \|R_{W,K}\| = o_p(K^{-1}),
\end{align*}
where $\Delta \W = \sum_{k=1}^K \Delta \W_k$.
Therefore,
\begin{align*}
    \hat{\bbe}_{DP} - \hat{\bbe}
    &= \W^{-1} \Delta \Q - \W^{-1} \Delta \W \W^{-1} \tilde{\Q} + O(\|\Delta \W\|_F^2),
\end{align*}
where $\Delta \Q = \sum_{k=1}^K \Delta \Q_k$.
Let $r_K=(\hat{\bbe}_{DP}-\hat{\bbe}) - \W^{-1}\Delta \Q$, such that $\hat{\bbe}_{DP} - \hat{\bbe} = \W^{-1} \Delta \Q + r_K$.
Under assumptions (A1)--(A4):
\[
\|r_K\| = o_p(K^{-1/2}) \quad\text{and}\quad E\|r_K\|^2 = o(K^{-1}).
\]
This follows from $\|\Delta\W\|_F = O_p(\sqrt{K})$, $\|\W^{-1}\| = O(K^{-1})$, and $\|\hat{\bbe}_{DP}\| = O_p(1)$ \citep{van2000asymptotic}.

Let $\W_0 = \lim_{K \to \infty} \frac{1}{K} \W$ and $\Sigma_\Q = V[\Delta \Q]/(K\sigma_{DP}^2)$. Then,
\begin{align*}
    E\| \W^{-1} \Delta \Q \|^2
    &= \text{tr}\left( E[\W^{-1} \Delta \Q \Delta \Q^\top \W^{-1}] \right) \\
    &= \text{tr}\left( \W^{-1} E[\Delta \Q \Delta \Q^\top] \W^{-1} \right) \\
    &= \text{tr}\left( \W^{-1} K \sigma_{DP}^2 \Sigma_\Q \W^{-1} \right) \\
    &= \text{tr}(\W_0^{-1} \Sigma_\Q \W_0^{-1}) \frac{\sigma_{DP}^2}{K} + o(K^{-1}),
\end{align*}
where the independence between $\Delta\Q$ and $\W$ is used, and $\W/K \xrightarrow{P} \W_0$.
Hence,
\begin{align*}
    E\|\hat{\bbe}_{DP} - \hat{\bbe}\|^2
    &= E\|\W^{-1} \Delta \Q + r_K \|^2 \\
    &= \text{tr}(\W_0^{-1} \Sigma_\Q \W_0^{-1}) \frac{\sigma_{DP}^2}{K} + o(K^{-1}) \\
    &= \text{tr}(\Omega) \frac{\sigma_{DP}^2}{K} + o(K^{-1}),
\end{align*}
where $\Omega = \W_0^{-1} \Sigma_\Q \W_0^{-1}$.
\end{proof}

\section{Additional simulation results}

\begin{figure}[h]
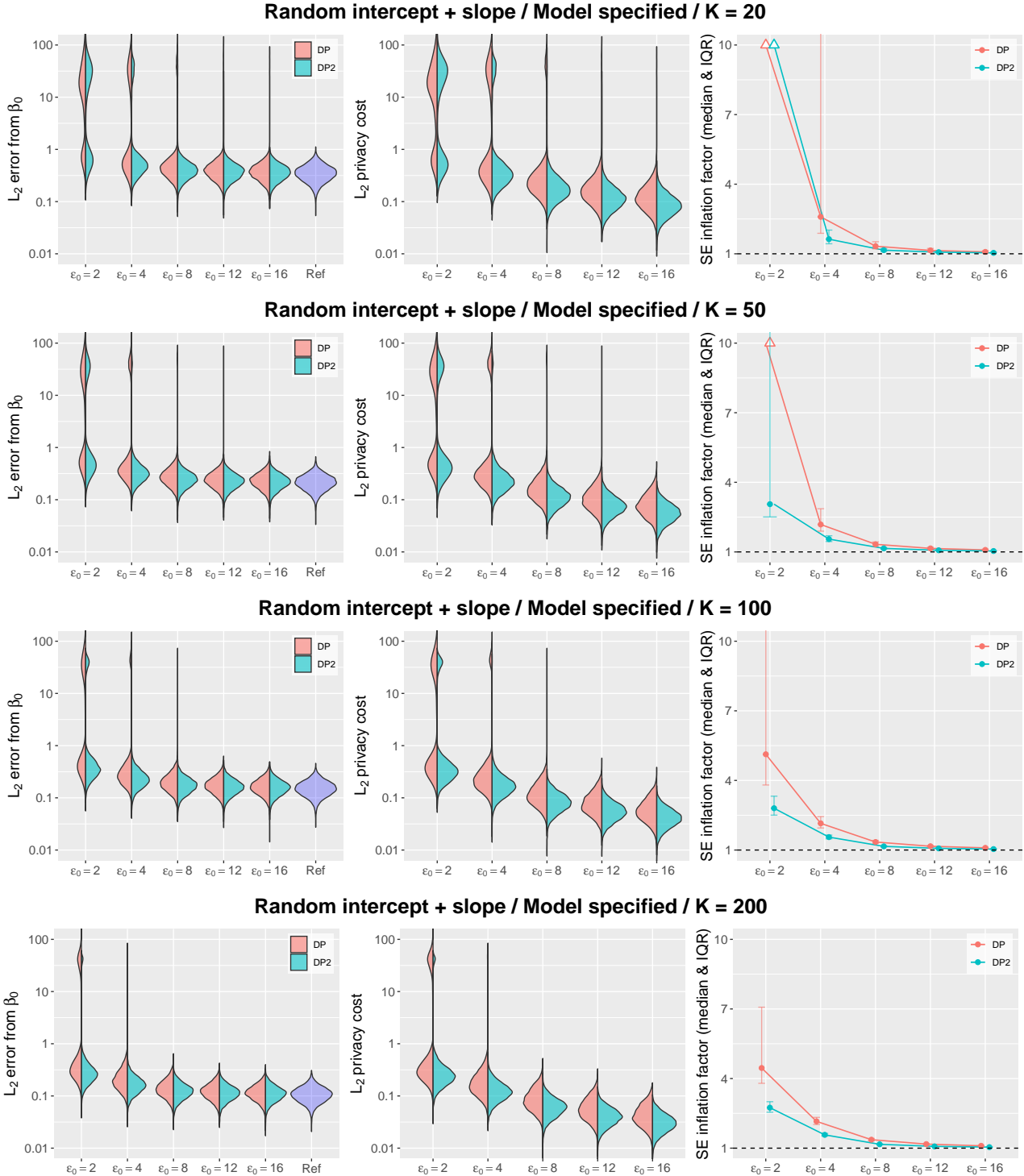

    \centering
    \includegraphics[width=0.95\textwidth, page=5]{main-figure/result-plot-all.pdf}
    \includegraphics[width=0.95\textwidth, page=6]{main-figure/result-plot-all.pdf}
    \includegraphics[width=0.95\textwidth, page=7]{main-figure/result-plot-all.pdf}
    \includegraphics[width=0.95\textwidth, page=8]{main-figure/result-plot-all.pdf}
    \caption{Overview of simulation results under the correctly specified random-intercept and slope model. Triangles indicate median SE inflation factors exceeding the plotting range; IQRs are omitted in those cases.}
    \label{fig-sim-estimation-add1}
\end{figure}

\begin{figure}[H]
    \centering
    \includegraphics[width=0.95\textwidth, page=9]{main-figure/result-plot-all.pdf}
    \includegraphics[width=0.95\textwidth, page=10]{main-figure/result-plot-all.pdf}
    \includegraphics[width=0.95\textwidth, page=11]{main-figure/result-plot-all.pdf}
    \includegraphics[width=0.95\textwidth, page=12]{main-figure/result-plot-all.pdf}
    \caption{Overview of simulation results under the  the misspecified random-intercept model.}
    \label{fig-sim-estimation-add2}
\end{figure}

\begin{figure}[H]
    \centering
    \includegraphics[width=0.95\textwidth, page=13]{main-figure/result-plot-all.pdf}
    \includegraphics[width=0.95\textwidth, page=14]{main-figure/result-plot-all.pdf}
    \includegraphics[width=0.95\textwidth, page=15]{main-figure/result-plot-all.pdf}
    \includegraphics[width=0.95\textwidth, page=16]{main-figure/result-plot-all.pdf}
    \caption{Overview of simulation results under the misspecified random-intercept and slope model.}
    \label{fig-sim-estimation-add3}
\end{figure}

\newpage

\begin{landscape}
\begingroup\fontsize{9}{11}\selectfont

\begin{longtable}[H]{lllc|ccc|ccc|ccc|ccc}
\caption{SE calibration ratio for $x_1$.}\\
\toprule
Model & FM & Strata & $\varepsilon$
& \multicolumn{3}{c}{CR0}
& \multicolumn{3}{c}{CR1}
& \multicolumn{3}{c}{CR1p}
& \multicolumn{3}{c}{CR1S} \\

\cmidrule(lr){5-7} \cmidrule(lr){8-10} \cmidrule(lr){11-13} \cmidrule(lr){14-16}

& & & 
& IPD & DP & DP2
& IPD & DP & DP2
& IPD & DP & DP2
& IPD & DP & DP2 \\

\midrule
\endfirsthead

\caption[]{SE calibration ratio for $x_1$. \textit{(continued)}}\\
\toprule
Model & FM & Strata & $\varepsilon$
& \multicolumn{3}{c}{CR0}
& \multicolumn{3}{c}{CR1}
& \multicolumn{3}{c}{CR1p}
& \multicolumn{3}{c}{CR1S} \\

\cmidrule(lr){5-7} \cmidrule(lr){8-10} \cmidrule(lr){11-13} \cmidrule(lr){14-16}

& & &
& IPD & DP & DP2
& IPD & DP & DP2
& IPD & DP & DP2
& IPD & DP & DP2 \\

\midrule
\endhead

\endfoot
\bottomrule
\endlastfoot

Int & Spec & 20 & 1 & 0.86 & 11938.51 & 9147.77 & 0.90 & 12566.85 & 9629.24 & 1.23 & 17055.01 & 13068.25 & 0.92 & 12727.04 & 9751.98\\
Int & Spec & 20 & 2 & 0.86 & 1745.41 & 4444.54 & 0.90 & 1837.28 & 4678.46 & 1.23 & 2493.45 & 6349.34 & 0.92 & 1860.70 & 4738.10\\
Int & Spec & 20 & 4 & 0.86 & 1348.82 & 1463.55 & 0.90 & 1419.81 & 1540.58 & 1.23 & 1926.88 & 2090.79 & 0.92 & 1437.91 & 1560.22\\
Int & Spec & 20 & 6 & 0.86 & 1182.13 & 364.15 & 0.90 & 1244.35 & 383.31 & 1.23 & 1688.76 & 520.21 & 0.92 & 1260.21 & 388.20\\
Int & Spec & 20 & 8 & 0.86 & 1205.85 & 34.39 & 0.90 & 1269.31 & 36.20 & 1.23 & 1722.64 & 49.13 & 0.92 & 1285.49 & 36.66\\
\hline
Int & Spec & 20 & 10 & 0.86 & 680.71 & 6.87 & 0.90 & 716.54 & 7.23 & 1.23 & 972.44 & 9.81 & 0.92 & 725.67 & 7.32\\
Int & Spec & 20 & 12 & 0.86 & 308.16 & 5.82 & 0.90 & 324.38 & 6.13 & 1.23 & 440.23 & 8.32 & 0.92 & 328.52 & 6.21\\
Int & Spec & 20 & 16 & 0.86 & 49.15 & 0.86 & 0.90 & 51.74 & 0.91 & 1.23 & 70.21 & 1.23 & 0.92 & 52.40 & 0.92\\
Int & Spec & 20 & 20 & 0.86 & 14.12 & 0.86 & 0.90 & 14.86 & 0.91 & 1.23 & 20.17 & 1.23 & 0.92 & 15.05 & 0.92\\
Int & Spec & 20 & Inf & 0.86 & 0.86 & 0.86 & 0.90 & 0.90 & 0.90 & 1.23 & 1.23 & 1.23 & 0.92 & 0.92 & 0.92\\
\hline
Int & Spec & 50 & 1 & 0.95 & 2811.63 & 1474.92 & 0.97 & 2869.01 & 1505.02 & 1.08 & 3195.04 & 1676.04 & 0.97 & 2883.56 & 1512.65\\
Int & Spec & 50 & 2 & 0.95 & 3217.91 & 1724.02 & 0.97 & 3283.58 & 1759.20 & 1.08 & 3656.72 & 1959.11 & 0.97 & 3300.24 & 1768.12\\
Int & Spec & 50 & 4 & 0.95 & 1556.33 & 111.96 & 0.97 & 1588.09 & 114.25 & 1.08 & 1768.56 & 127.23 & 0.97 & 1596.15 & 114.83\\
Int & Spec & 50 & 6 & 0.95 & 806.62 & 6.93 & 0.97 & 823.09 & 7.07 & 1.08 & 916.62 & 7.87 & 0.97 & 827.26 & 7.11\\
Int & Spec & 50 & 8 & 0.95 & 297.79 & 0.98 & 0.97 & 303.86 & 1.00 & 1.08 & 338.39 & 1.11 & 0.97 & 305.41 & 1.00\\
\hline
Int & Spec & 50 & 10 & 0.95 & 49.53 & 0.95 & 0.97 & 50.54 & 0.97 & 1.08 & 56.28 & 1.08 & 0.97 & 50.80 & 0.98\\
Int & Spec & 50 & 12 & 0.95 & 19.61 & 0.95 & 0.97 & 20.01 & 0.97 & 1.08 & 22.29 & 1.08 & 0.97 & 20.11 & 0.98\\
Int & Spec & 50 & 16 & 0.95 & 1.01 & 0.95 & 0.97 & 1.03 & 0.97 & 1.08 & 1.14 & 1.08 & 0.97 & 1.03 & 0.98\\
Int & Spec & 50 & 20 & 0.95 & 0.95 & 0.95 & 0.97 & 0.97 & 0.97 & 1.08 & 1.08 & 1.08 & 0.97 & 0.97 & 0.97\\
Int & Spec & 50 & Inf & 0.95 & 0.95 & 0.95 & 0.97 & 0.97 & 0.97 & 1.08 & 1.08 & 1.08 & 0.97 & 0.97 & 0.97\\
\hline
Int & Spec & 100 & 1 & 0.96 & 2080.02 & 3932.60 & 0.97 & 2101.03 & 3972.32 & 1.03 & 2212.79 & 4183.62 & 0.98 & 2106.36 & 3982.39\\
Int & Spec & 100 & 2 & 0.96 & 2690.96 & 551.15 & 0.97 & 2718.14 & 556.71 & 1.03 & 2862.72 & 586.33 & 0.98 & 2725.03 & 558.12\\
Int & Spec & 100 & 4 & 0.96 & 1094.71 & 11.20 & 0.97 & 1105.77 & 11.31 & 1.03 & 1164.59 & 11.91 & 0.98 & 1108.57 & 11.34\\
Int & Spec & 100 & 6 & 0.96 & 279.24 & 0.97 & 0.97 & 282.06 & 0.98 & 1.03 & 297.06 & 1.03 & 0.98 & 282.77 & 0.98\\
Int & Spec & 100 & 8 & 0.96 & 33.21 & 0.97 & 0.97 & 33.54 & 0.98 & 1.03 & 35.33 & 1.03 & 0.98 & 33.63 & 0.98\\
\hline
Int & Spec & 100 & 10 & 0.96 & 0.97 & 0.97 & 0.97 & 0.98 & 0.98 & 1.03 & 1.03 & 1.03 & 0.98 & 0.98 & 0.98\\
Int & Spec & 100 & 12 & 0.96 & 0.97 & 0.97 & 0.97 & 0.98 & 0.97 & 1.03 & 1.03 & 1.03 & 0.98 & 0.98 & 0.98\\
Int & Spec & 100 & 16 & 0.96 & 0.97 & 0.97 & 0.97 & 0.98 & 0.98 & 1.03 & 1.03 & 1.03 & 0.98 & 0.98 & 0.98\\
Int & Spec & 100 & 20 & 0.96 & 0.97 & 0.97 & 0.97 & 0.98 & 0.98 & 1.03 & 1.03 & 1.03 & 0.98 & 0.98 & 0.98\\
Int & Spec & 100 & Inf & 0.96 & 0.96 & 0.96 & 0.97 & 0.97 & 0.97 & 1.03 & 1.03 & 1.03 & 0.98 & 0.98 & 0.98\\
\hline
Int & Spec & 200 & 1 & 0.99 & 3220.96 & 1735.10 & 1.00 & 3237.14 & 1743.82 & 1.02 & 3320.57 & 1788.76 & 1.00 & 3241.23 & 1746.02\\
Int & Spec & 200 & 2 & 0.99 & 2264.70 & 139.16 & 1.00 & 2276.08 & 139.86 & 1.02 & 2334.75 & 143.47 & 1.00 & 2278.96 & 140.04\\
Int & Spec & 200 & 4 & 0.99 & 408.80 & 1.36 & 1.00 & 410.85 & 1.37 & 1.02 & 421.44 & 1.40 & 1.00 & 411.37 & 1.37\\
Int & Spec & 200 & 6 & 0.99 & 37.07 & 1.00 & 1.00 & 37.26 & 1.01 & 1.02 & 38.22 & 1.03 & 1.00 & 37.30 & 1.01\\
Int & Spec & 200 & 8 & 0.99 & 0.99 & 0.99 & 1.00 & 1.00 & 1.00 & 1.02 & 1.02 & 1.03 & 1.00 & 1.00 & 1.00\\
\hline
Int & Spec & 200 & 10 & 0.99 & 1.00 & 0.99 & 1.00 & 1.00 & 1.00 & 1.02 & 1.03 & 1.02 & 1.00 & 1.00 & 1.00\\
Int & Spec & 200 & 12 & 0.99 & 0.99 & 0.99 & 1.00 & 1.00 & 1.00 & 1.02 & 1.02 & 1.02 & 1.00 & 1.00 & 1.00\\
Int & Spec & 200 & 16 & 0.99 & 0.99 & 0.99 & 1.00 & 1.00 & 1.00 & 1.02 & 1.02 & 1.02 & 1.00 & 1.00 & 1.00\\
Int & Spec & 200 & 20 & 0.99 & 0.99 & 0.99 & 1.00 & 1.00 & 1.00 & 1.02 & 1.02 & 1.03 & 1.00 & 1.00 & 1.00\\
Int & Spec & 200 & Inf & 0.99 & 0.99 & 0.99 & 1.00 & 1.00 & 1.00 & 1.02 & 1.02 & 1.02 & 1.00 & 1.00 & 1.00\\
\hline
Int + slope & Spec & 20 & 1 & 0.84 & 7896.17 & 11659.57 & 0.89 & 8311.75 & 12273.23 & 1.20 & 11280.24 & 16656.53 & 0.90 & 8418.73 & 12431.20\\
Int + slope & Spec & 20 & 2 & 0.84 & 2818.28 & 6795.41 & 0.89 & 2966.61 & 7153.06 & 1.20 & 4026.11 & 9707.73 & 0.90 & 3004.79 & 7245.13\\
Int + slope & Spec & 20 & 4 & 0.84 & 3949.38 & 1294.85 & 0.89 & 4157.24 & 1363.00 & 1.20 & 5641.97 & 1849.79 & 0.90 & 4210.75 & 1380.55\\
Int + slope & Spec & 20 & 6 & 0.84 & 2009.13 & 127.16 & 0.89 & 2114.87 & 133.86 & 1.20 & 2870.18 & 181.66 & 0.90 & 2142.09 & 135.58\\
Int + slope & Spec & 20 & 8 & 0.84 & 841.87 & 20.64 & 0.89 & 886.18 & 21.73 & 1.20 & 1202.67 & 29.49 & 0.90 & 897.59 & 22.01\\
\hline
Int + slope & Spec & 20 & 10 & 0.84 & 438.88 & 1.88 & 0.89 & 461.97 & 1.98 & 1.20 & 626.96 & 2.69 & 0.90 & 467.92 & 2.00\\
Int + slope & Spec & 20 & 12 & 0.84 & 197.99 & 3.71 & 0.89 & 208.41 & 3.91 & 1.20 & 282.84 & 5.31 & 0.90 & 211.09 & 3.96\\
Int + slope & Spec & 20 & 16 & 0.84 & 20.36 & 0.85 & 0.89 & 21.43 & 0.89 & 1.20 & 29.09 & 1.21 & 0.90 & 21.71 & 0.90\\
Int + slope & Spec & 20 & 20 & 0.84 & 2.17 & 0.85 & 0.89 & 2.28 & 0.89 & 1.20 & 3.10 & 1.21 & 0.90 & 2.31 & 0.90\\
Int + slope & Spec & 20 & Inf & 0.84 & 0.84 & 0.84 & 0.89 & 0.89 & 0.89 & 1.20 & 1.20 & 1.20 & 0.90 & 0.90 & 0.90\\
\hline
Int + slope & Spec & 50 & 1 & 0.96 & 4813.74 & 1689.86 & 0.98 & 4911.98 & 1724.35 & 1.10 & 5470.16 & 1920.30 & 0.99 & 4936.98 & 1733.12\\
Int + slope & Spec & 50 & 2 & 0.96 & 4847.80 & 1536.82 & 0.98 & 4946.73 & 1568.19 & 1.10 & 5508.86 & 1746.39 & 0.99 & 4971.91 & 1576.17\\
Int + slope & Spec & 50 & 4 & 0.96 & 1719.74 & 59.87 & 0.98 & 1754.84 & 61.09 & 1.10 & 1954.25 & 68.03 & 0.99 & 1763.77 & 61.40\\
Int + slope & Spec & 50 & 6 & 0.96 & 545.70 & 2.34 & 0.98 & 556.83 & 2.39 & 1.10 & 620.11 & 2.66 & 0.99 & 559.67 & 2.40\\
Int + slope & Spec & 50 & 8 & 0.96 & 143.82 & 0.97 & 0.98 & 146.76 & 0.99 & 1.10 & 163.44 & 1.10 & 0.99 & 147.51 & 1.00\\
\hline
Int + slope & Spec & 50 & 10 & 0.96 & 7.20 & 0.96 & 0.98 & 7.35 & 0.98 & 1.10 & 8.18 & 1.09 & 0.99 & 7.38 & 0.99\\
Int + slope & Spec & 50 & 12 & 0.96 & 2.20 & 0.97 & 0.98 & 2.25 & 0.99 & 1.10 & 2.50 & 1.10 & 0.99 & 2.26 & 0.99\\
Int + slope & Spec & 50 & 16 & 0.96 & 0.97 & 0.97 & 0.98 & 0.98 & 0.99 & 1.10 & 1.10 & 1.10 & 0.99 & 0.99 & 0.99\\
Int + slope & Spec & 50 & 20 & 0.96 & 0.97 & 0.96 & 0.98 & 0.99 & 0.98 & 1.10 & 1.10 & 1.10 & 0.99 & 0.99 & 0.99\\
Int + slope & Spec & 50 & Inf & 0.96 & 0.96 & 0.96 & 0.98 & 0.98 & 0.98 & 1.10 & 1.10 & 1.10 & 0.99 & 0.99 & 0.99\\
\hline
Int + slope & Spec & 100 & 1 & 0.98 & 5265.70 & 3778.03 & 0.99 & 5318.89 & 3816.19 & 1.04 & 5601.81 & 4019.18 & 0.99 & 5332.38 & 3825.87\\
Int + slope & Spec & 100 & 2 & 0.98 & 3355.28 & 432.65 & 0.99 & 3389.17 & 437.02 & 1.04 & 3569.45 & 460.27 & 0.99 & 3397.77 & 438.13\\
Int + slope & Spec & 100 & 4 & 0.98 & 748.54 & 3.47 & 0.99 & 756.10 & 3.51 & 1.04 & 796.32 & 3.70 & 0.99 & 758.01 & 3.52\\
Int + slope & Spec & 100 & 6 & 0.98 & 149.23 & 0.98 & 0.99 & 150.73 & 0.99 & 1.04 & 158.75 & 1.04 & 0.99 & 151.12 & 0.99\\
Int + slope & Spec & 100 & 8 & 0.98 & 7.41 & 0.98 & 0.99 & 7.49 & 0.99 & 1.04 & 7.88 & 1.04 & 0.99 & 7.50 & 0.99\\
\hline
Int + slope & Spec & 100 & 10 & 0.98 & 0.98 & 0.98 & 0.99 & 0.99 & 0.99 & 1.04 & 1.05 & 1.04 & 0.99 & 1.00 & 0.99\\
Int + slope & Spec & 100 & 12 & 0.98 & 0.98 & 0.98 & 0.99 & 0.99 & 0.99 & 1.04 & 1.04 & 1.04 & 0.99 & 0.99 & 0.99\\
Int + slope & Spec & 100 & 16 & 0.98 & 0.98 & 0.98 & 0.99 & 0.99 & 0.99 & 1.04 & 1.04 & 1.04 & 0.99 & 0.99 & 0.99\\
Int + slope & Spec & 100 & 20 & 0.98 & 0.98 & 0.98 & 0.99 & 0.99 & 0.99 & 1.04 & 1.04 & 1.04 & 0.99 & 0.99 & 0.99\\
Int + slope & Spec & 100 & Inf & 0.98 & 0.98 & 0.98 & 0.99 & 0.99 & 0.99 & 1.04 & 1.04 & 1.04 & 0.99 & 0.99 & 0.99\\
\hline
Int + slope & Spec & 200 & 1 & 1.00 & 1523.12 & 1492.43 & 1.01 & 1530.78 & 1499.93 & 1.03 & 1570.23 & 1538.59 & 1.01 & 1532.71 & 1501.82\\
Int + slope & Spec & 200 & 2 & 1.00 & 1658.23 & 91.29 & 1.01 & 1666.57 & 91.74 & 1.03 & 1709.52 & 94.11 & 1.01 & 1668.67 & 91.86\\
Int + slope & Spec & 200 & 4 & 1.00 & 214.55 & 1.00 & 1.01 & 215.63 & 1.01 & 1.03 & 221.19 & 1.04 & 1.01 & 215.90 & 1.01\\
Int + slope & Spec & 200 & 6 & 1.00 & 7.43 & 1.01 & 1.01 & 7.47 & 1.01 & 1.03 & 7.66 & 1.04 & 1.01 & 7.48 & 1.01\\
Int + slope & Spec & 200 & 8 & 1.00 & 0.99 & 1.00 & 1.01 & 1.00 & 1.00 & 1.03 & 1.02 & 1.03 & 1.01 & 1.00 & 1.00\\
\hline
Int + slope & Spec & 200 & 10 & 1.00 & 1.01 & 1.00 & 1.01 & 1.01 & 1.00 & 1.03 & 1.04 & 1.03 & 1.01 & 1.01 & 1.00\\
Int + slope & Spec & 200 & 12 & 1.00 & 1.00 & 1.00 & 1.01 & 1.01 & 1.01 & 1.03 & 1.03 & 1.03 & 1.01 & 1.01 & 1.01\\
Int + slope & Spec & 200 & 16 & 1.00 & 1.00 & 1.00 & 1.01 & 1.01 & 1.00 & 1.03 & 1.03 & 1.03 & 1.01 & 1.01 & 1.00\\
Int + slope & Spec & 200 & 20 & 1.00 & 1.00 & 1.00 & 1.01 & 1.01 & 1.00 & 1.03 & 1.03 & 1.03 & 1.01 & 1.01 & 1.01\\
Int + slope & Spec & 200 & Inf & 1.00 & 1.00 & 1.00 & 1.01 & 1.01 & 1.01 & 1.03 & 1.03 & 1.03 & 1.01 & 1.01 & 1.01\\
\hline
Int & Misspec & 20 & 1 & 0.87 & 2606.01 & 0.87 & 0.91 & 2743.17 & 0.91 & 1.24 & 3722.87 & 1.24 & 0.92 & 2778.13 & 0.92\\
Int & Misspec & 20 & 2 & 0.87 & 1804.22 & 0.87 & 0.91 & 1899.18 & 0.91 & 1.24 & 2577.46 & 1.24 & 0.92 & 1923.39 & 0.92\\
Int & Misspec & 20 & 4 & 0.87 & 2049.65 & 0.87 & 0.91 & 2157.53 & 0.91 & 1.24 & 2928.07 & 1.24 & 0.92 & 2185.03 & 0.92\\
Int & Misspec & 20 & 6 & 0.87 & 593.35 & 0.87 & 0.91 & 624.57 & 0.91 & 1.24 & 847.64 & 1.24 & 0.92 & 632.54 & 0.92\\
Int & Misspec & 20 & 8 & 0.87 & 212.87 & 0.87 & 0.91 & 224.08 & 0.91 & 1.24 & 304.11 & 1.24 & 0.92 & 226.93 & 0.92\\
\hline
Int & Misspec & 20 & 10 & 0.87 & 45.27 & 0.87 & 0.91 & 47.65 & 0.91 & 1.24 & 64.67 & 1.24 & 0.92 & 48.26 & 0.92\\
Int & Misspec & 20 & 12 & 0.87 & 0.87 & 0.87 & 0.91 & 0.92 & 0.91 & 1.24 & 1.25 & 1.24 & 0.92 & 0.93 & 0.92\\
Int & Misspec & 20 & 16 & 0.87 & 0.87 & 0.87 & 0.91 & 0.92 & 0.91 & 1.24 & 1.25 & 1.24 & 0.92 & 0.93 & 0.92\\
Int & Misspec & 20 & 20 & 0.87 & 0.87 & 0.87 & 0.91 & 0.91 & 0.91 & 1.24 & 1.24 & 1.24 & 0.92 & 0.93 & 0.92\\
Int & Misspec & 20 & Inf & 0.87 & 0.87 & 0.87 & 0.91 & 0.91 & 0.91 & 1.24 & 1.24 & 1.24 & 0.92 & 0.92 & 0.92\\
\hline
Int & Misspec & 50 & 1 & 0.95 & 4619.90 & 0.95 & 0.97 & 4714.18 & 0.97 & 1.08 & 5249.89 & 1.08 & 0.98 & 4738.10 & 0.98\\
Int & Misspec & 50 & 2 & 0.95 & 3783.15 & 0.95 & 0.97 & 3860.35 & 0.97 & 1.08 & 4299.03 & 1.08 & 0.98 & 3879.94 & 0.98\\
Int & Misspec & 50 & 4 & 0.95 & 515.13 & 0.95 & 0.97 & 525.65 & 0.97 & 1.08 & 585.38 & 1.08 & 0.98 & 528.31 & 0.98\\
Int & Misspec & 50 & 6 & 0.95 & 59.02 & 0.95 & 0.97 & 60.22 & 0.97 & 1.08 & 67.07 & 1.08 & 0.98 & 60.53 & 0.98\\
Int & Misspec & 50 & 8 & 0.95 & 0.96 & 0.95 & 0.97 & 0.98 & 0.97 & 1.08 & 1.09 & 1.08 & 0.98 & 0.98 & 0.98\\
\hline
Int & Misspec & 50 & 10 & 0.95 & 0.96 & 0.95 & 0.97 & 0.98 & 0.97 & 1.08 & 1.09 & 1.08 & 0.98 & 0.99 & 0.98\\
Int & Misspec & 50 & 12 & 0.95 & 0.96 & 0.95 & 0.97 & 0.97 & 0.97 & 1.08 & 1.09 & 1.08 & 0.98 & 0.98 & 0.98\\
Int & Misspec & 50 & 16 & 0.95 & 0.95 & 0.95 & 0.97 & 0.97 & 0.97 & 1.08 & 1.08 & 1.08 & 0.98 & 0.98 & 0.98\\
Int & Misspec & 50 & 20 & 0.95 & 0.95 & 0.95 & 0.97 & 0.97 & 0.97 & 1.08 & 1.08 & 1.08 & 0.98 & 0.98 & 0.98\\
Int & Misspec & 50 & Inf & 0.95 & 0.95 & 0.95 & 0.97 & 0.97 & 0.97 & 1.08 & 1.08 & 1.08 & 0.98 & 0.98 & 0.98\\
\hline
Int & Misspec & 100 & 1 & 0.98 & 6064.39 & 0.98 & 0.99 & 6125.65 & 0.99 & 1.04 & 6451.48 & 1.04 & 0.99 & 6141.18 & 0.99\\
Int & Misspec & 100 & 2 & 0.98 & 1944.58 & 0.98 & 0.99 & 1964.22 & 0.99 & 1.04 & 2068.70 & 1.04 & 0.99 & 1969.20 & 0.99\\
Int & Misspec & 100 & 4 & 0.98 & 110.96 & 0.98 & 0.99 & 112.09 & 0.99 & 1.04 & 118.05 & 1.04 & 0.99 & 112.37 & 0.99\\
Int & Misspec & 100 & 6 & 0.98 & 0.98 & 0.98 & 0.99 & 0.99 & 0.99 & 1.04 & 1.04 & 1.04 & 0.99 & 0.99 & 0.99\\
Int & Misspec & 100 & 8 & 0.98 & 0.98 & 0.98 & 0.99 & 0.99 & 0.99 & 1.04 & 1.04 & 1.04 & 0.99 & 0.99 & 0.99\\
\hline
Int & Misspec & 100 & 10 & 0.98 & 0.99 & 0.98 & 0.99 & 1.00 & 0.99 & 1.04 & 1.05 & 1.04 & 0.99 & 1.00 & 0.99\\
Int & Misspec & 100 & 12 & 0.98 & 0.98 & 0.98 & 0.99 & 0.99 & 0.99 & 1.04 & 1.04 & 1.04 & 0.99 & 0.99 & 0.99\\
Int & Misspec & 100 & 16 & 0.98 & 0.98 & 0.98 & 0.99 & 0.99 & 0.99 & 1.04 & 1.04 & 1.04 & 0.99 & 0.99 & 0.99\\
Int & Misspec & 100 & 20 & 0.98 & 0.98 & 0.98 & 0.99 & 0.99 & 0.99 & 1.04 & 1.04 & 1.04 & 0.99 & 0.99 & 0.99\\
Int & Misspec & 100 & Inf & 0.98 & 0.98 & 0.98 & 0.99 & 0.99 & 0.99 & 1.04 & 1.04 & 1.04 & 0.99 & 0.99 & 0.99\\
\hline
Int & Misspec & 200 & 1 & 0.98 & 3709.06 & 0.98 & 0.98 & 3727.70 & 0.98 & 1.01 & 3823.77 & 1.01 & 0.99 & 3732.41 & 0.98\\
Int & Misspec & 200 & 2 & 0.98 & 698.88 & 0.98 & 0.98 & 702.39 & 0.98 & 1.01 & 720.49 & 1.01 & 0.99 & 703.28 & 0.98\\
Int & Misspec & 200 & 4 & 0.98 & 0.97 & 0.98 & 0.98 & 0.98 & 0.98 & 1.01 & 1.00 & 1.01 & 0.99 & 0.98 & 0.99\\
Int & Misspec & 200 & 6 & 0.98 & 0.99 & 0.98 & 0.98 & 0.99 & 0.98 & 1.01 & 1.02 & 1.01 & 0.99 & 0.99 & 0.99\\
Int & Misspec & 200 & 8 & 0.98 & 0.99 & 0.98 & 0.98 & 0.99 & 0.98 & 1.01 & 1.02 & 1.01 & 0.99 & 0.99 & 0.99\\
\hline
Int & Misspec & 200 & 10 & 0.98 & 0.98 & 0.98 & 0.98 & 0.99 & 0.98 & 1.01 & 1.01 & 1.01 & 0.99 & 0.99 & 0.99\\
Int & Misspec & 200 & 12 & 0.98 & 0.98 & 0.98 & 0.98 & 0.99 & 0.98 & 1.01 & 1.01 & 1.01 & 0.99 & 0.99 & 0.99\\
Int & Misspec & 200 & 16 & 0.98 & 0.98 & 0.98 & 0.98 & 0.98 & 0.98 & 1.01 & 1.01 & 1.01 & 0.99 & 0.99 & 0.99\\
Int & Misspec & 200 & 20 & 0.98 & 0.98 & 0.98 & 0.98 & 0.98 & 0.98 & 1.01 & 1.01 & 1.01 & 0.99 & 0.98 & 0.99\\
Int & Misspec & 200 & Inf & 0.98 & 0.98 & 0.98 & 0.98 & 0.98 & 0.98 & 1.01 & 1.01 & 1.01 & 0.99 & 0.99 & 0.99\\
\hline
Int + slope & Misspec & 20 & 1 & 0.85 & 2846.71 & 0.85 & 0.90 & 2996.54 & 0.90 & 1.22 & 4066.73 & 1.22 & 0.91 & 3034.80 & 0.91\\
Int + slope & Misspec & 20 & 2 & 0.85 & 2001.34 & 0.85 & 0.90 & 2106.67 & 0.90 & 1.22 & 2859.05 & 1.22 & 0.91 & 2133.57 & 0.91\\
Int + slope & Misspec & 20 & 4 & 0.85 & 1990.16 & 0.85 & 0.90 & 2094.90 & 0.90 & 1.22 & 2843.08 & 1.22 & 0.91 & 2121.65 & 0.91\\
Int + slope & Misspec & 20 & 6 & 0.85 & 429.01 & 0.85 & 0.90 & 451.59 & 0.90 & 1.22 & 612.87 & 1.22 & 0.91 & 457.36 & 0.91\\
Int + slope & Misspec & 20 & 8 & 0.85 & 72.45 & 0.85 & 0.90 & 76.26 & 0.90 & 1.22 & 103.50 & 1.22 & 0.91 & 77.24 & 0.91\\
\hline
Int + slope & Misspec & 20 & 10 & 0.85 & 4.53 & 0.85 & 0.90 & 4.77 & 0.90 & 1.22 & 6.48 & 1.22 & 0.91 & 4.83 & 0.91\\
Int + slope & Misspec & 20 & 12 & 0.85 & 1.24 & 0.85 & 0.90 & 1.30 & 0.90 & 1.22 & 1.77 & 1.22 & 0.91 & 1.32 & 0.91\\
Int + slope & Misspec & 20 & 16 & 0.85 & 0.86 & 0.85 & 0.90 & 0.90 & 0.90 & 1.22 & 1.22 & 1.22 & 0.91 & 0.91 & 0.91\\
Int + slope & Misspec & 20 & 20 & 0.85 & 0.85 & 0.85 & 0.90 & 0.90 & 0.90 & 1.22 & 1.22 & 1.22 & 0.91 & 0.91 & 0.91\\
Int + slope & Misspec & 20 & Inf & 0.85 & 0.85 & 0.85 & 0.90 & 0.90 & 0.90 & 1.22 & 1.22 & 1.22 & 0.91 & 0.91 & 0.91\\
\hline
Int + slope & Misspec & 50 & 1 & 0.95 & 7515.61 & 0.95 & 0.97 & 7668.99 & 0.97 & 1.08 & 8540.47 & 1.08 & 0.97 & 7707.96 & 0.97\\
Int + slope & Misspec & 50 & 2 & 0.95 & 3159.57 & 0.95 & 0.97 & 3224.05 & 0.97 & 1.08 & 3590.42 & 1.08 & 0.97 & 3240.43 & 0.97\\
Int + slope & Misspec & 50 & 4 & 0.95 & 337.43 & 0.95 & 0.97 & 344.32 & 0.97 & 1.08 & 383.44 & 1.08 & 0.97 & 346.07 & 0.97\\
Int + slope & Misspec & 50 & 6 & 0.95 & 13.51 & 0.95 & 0.97 & 13.78 & 0.97 & 1.08 & 15.35 & 1.08 & 0.97 & 13.85 & 0.97\\
Int + slope & Misspec & 50 & 8 & 0.95 & 0.95 & 0.95 & 0.97 & 0.97 & 0.97 & 1.08 & 1.08 & 1.08 & 0.97 & 0.98 & 0.97\\
\hline
Int + slope & Misspec & 50 & 10 & 0.95 & 0.95 & 0.95 & 0.97 & 0.97 & 0.97 & 1.08 & 1.08 & 1.08 & 0.97 & 0.98 & 0.97\\
Int + slope & Misspec & 50 & 12 & 0.95 & 0.95 & 0.95 & 0.97 & 0.97 & 0.97 & 1.08 & 1.08 & 1.08 & 0.97 & 0.97 & 0.97\\
Int + slope & Misspec & 50 & 16 & 0.95 & 0.95 & 0.95 & 0.97 & 0.97 & 0.97 & 1.08 & 1.08 & 1.08 & 0.97 & 0.98 & 0.97\\
Int + slope & Misspec & 50 & 20 & 0.95 & 0.95 & 0.95 & 0.97 & 0.97 & 0.97 & 1.08 & 1.08 & 1.08 & 0.97 & 0.98 & 0.97\\
Int + slope & Misspec & 50 & Inf & 0.95 & 0.95 & 0.95 & 0.97 & 0.97 & 0.97 & 1.08 & 1.08 & 1.08 & 0.97 & 0.97 & 0.97\\
\hline
Int + slope & Misspec & 100 & 1 & 0.98 & 5695.08 & 0.98 & 0.99 & 5752.60 & 0.99 & 1.04 & 6058.59 & 1.04 & 0.99 & 5767.17 & 0.99\\
Int + slope & Misspec & 100 & 2 & 0.98 & 1598.79 & 0.98 & 0.99 & 1614.94 & 0.99 & 1.04 & 1700.84 & 1.04 & 0.99 & 1619.03 & 0.99\\
Int + slope & Misspec & 100 & 4 & 0.98 & 19.93 & 0.98 & 0.99 & 20.13 & 0.99 & 1.04 & 21.20 & 1.04 & 0.99 & 20.18 & 0.99\\
Int + slope & Misspec & 100 & 6 & 0.98 & 0.99 & 0.98 & 0.99 & 1.00 & 0.99 & 1.04 & 1.05 & 1.04 & 0.99 & 1.00 & 0.99\\
Int + slope & Misspec & 100 & 8 & 0.98 & 0.97 & 0.98 & 0.99 & 0.98 & 0.99 & 1.04 & 1.03 & 1.04 & 0.99 & 0.99 & 0.99\\
\hline
Int + slope & Misspec & 100 & 10 & 0.98 & 0.98 & 0.98 & 0.99 & 0.99 & 0.99 & 1.04 & 1.04 & 1.04 & 0.99 & 0.99 & 0.99\\
Int + slope & Misspec & 100 & 12 & 0.98 & 0.98 & 0.98 & 0.99 & 0.99 & 0.99 & 1.04 & 1.04 & 1.04 & 0.99 & 0.99 & 0.99\\
Int + slope & Misspec & 100 & 16 & 0.98 & 0.98 & 0.98 & 0.99 & 0.99 & 0.99 & 1.04 & 1.04 & 1.04 & 0.99 & 0.99 & 0.99\\
Int + slope & Misspec & 100 & 20 & 0.98 & 0.98 & 0.98 & 0.99 & 0.99 & 0.99 & 1.04 & 1.04 & 1.04 & 0.99 & 0.99 & 0.99\\
Int + slope & Misspec & 100 & Inf & 0.98 & 0.98 & 0.98 & 0.99 & 0.99 & 0.99 & 1.04 & 1.04 & 1.04 & 0.99 & 0.99 & 0.99\\
\hline
Int + slope & Misspec & 200 & 1 & 1.00 & 4112.56 & 1.00 & 1.00 & 4133.22 & 1.00 & 1.03 & 4239.75 & 1.03 & 1.00 & 4138.44 & 1.00\\
Int + slope & Misspec & 200 & 2 & 1.00 & 561.51 & 1.00 & 1.00 & 564.34 & 1.00 & 1.03 & 578.88 & 1.03 & 1.00 & 565.05 & 1.00\\
Int + slope & Misspec & 200 & 4 & 1.00 & 1.00 & 1.00 & 1.00 & 1.00 & 1.00 & 1.03 & 1.03 & 1.03 & 1.00 & 1.01 & 1.00\\
Int + slope & Misspec & 200 & 6 & 1.00 & 1.01 & 1.00 & 1.00 & 1.01 & 1.00 & 1.03 & 1.04 & 1.03 & 1.00 & 1.01 & 1.00\\
Int + slope & Misspec & 200 & 8 & 1.00 & 0.99 & 1.00 & 1.00 & 1.00 & 1.00 & 1.03 & 1.02 & 1.03 & 1.00 & 1.00 & 1.00\\
\hline
Int + slope & Misspec & 200 & 10 & 1.00 & 1.00 & 1.00 & 1.00 & 1.00 & 1.00 & 1.03 & 1.03 & 1.03 & 1.00 & 1.00 & 1.00\\
Int + slope & Misspec & 200 & 12 & 1.00 & 1.00 & 1.00 & 1.00 & 1.00 & 1.00 & 1.03 & 1.03 & 1.03 & 1.00 & 1.00 & 1.00\\
Int + slope & Misspec & 200 & 16 & 1.00 & 1.00 & 1.00 & 1.00 & 1.00 & 1.00 & 1.03 & 1.03 & 1.03 & 1.00 & 1.00 & 1.00\\
Int + slope & Misspec & 200 & 20 & 1.00 & 1.00 & 1.00 & 1.00 & 1.00 & 1.00 & 1.03 & 1.03 & 1.03 & 1.00 & 1.00 & 1.00\\
Int + slope & Misspec & 200 & Inf & 1.00 & 1.00 & 1.00 & 1.00 & 1.00 & 1.00 & 1.03 & 1.03 & 1.03 & 1.00 & 1.00 & 1.00\\*
\end{longtable}
\endgroup{}
\end{landscape}

\section{Additional real data analysis}

\begin{table}[H]
\centering\begingroup\fontsize{9}{11}\selectfont
\caption{Quantiles of $L_2$ privacy cost in 10,000 times perturbation in the real data analysis.}
\begin{tabular}{cccccccccc}
\toprule
$\varepsilon_0$ & $Q_{0.01}$ & $Q_{0.05}$ & $Q_{0.1}$ & $Q_{0.25}$ & $Q_{0.5}$ & $Q_{0.75}$ & $Q_{0.9}$ & $Q_{0.95}$ & $Q_{0.99}$\\
\midrule
1 & 0.006 & 0.011 & 0.014 & 0.022 & 0.035 & 0.059 & 43.539 & 104.902 & 164.659\\
2 & 0.003 & 0.005 & 0.007 & 0.011 & 0.016 & 0.024 & 0.033 & 0.040 & 0.052\\
4 & 0.002 & 0.003 & 0.003 & 0.005 & 0.008 & 0.012 & 0.016 & 0.020 & 0.025\\
6 & 0.001 & 0.002 & 0.002 & 0.004 & 0.005 & 0.008 & 0.011 & 0.013 & 0.017\\
8 & 0.001 & 0.001 & 0.002 & 0.003 & 0.004 & 0.006 & 0.008 & 0.010 & 0.013\\
10 & 0.001 & 0.001 & 0.001 & 0.002 & 0.003 & 0.005 & 0.007 & 0.008 & 0.010\\
12 & 0.001 & 0.001 & 0.001 & 0.002 & 0.003 & 0.004 & 0.005 & 0.006 & 0.008\\
16 & 0.000 & 0.001 & 0.001 & 0.001 & 0.002 & 0.003 & 0.004 & 0.005 & 0.006\\
20 & 0.000 & 0.001 & 0.001 & 0.001 & 0.002 & 0.002 & 0.003 & 0.004 & 0.005\\
\bottomrule
\end{tabular}
\endgroup{}
\end{table}

\begin{table}[H]
\centering\begingroup\fontsize{9}{11}\selectfont

\caption{Quantiles of SE calibration error in 10,000 times perturbation in the real data analysis.}
\begin{tabular}{cccccccccc}
\toprule
$\varepsilon_0$ & $Q_{0.01}$ & $Q_{0.05}$ & $Q_{0.1}$ & $Q_{0.25}$ & $Q_{0.5}$ & $Q_{0.75}$ & $Q_{0.9}$ & $Q_{0.95}$ & $Q_{0.99}$\\
\midrule
1 & 1.600 & 1.702 & 1.761 & 1.873 & 2.035 & 2.292 & 371388.664 & 10358224.546 & 17365127.861\\
2 & 1.111 & 1.159 & 1.189 & 1.240 & 1.307 & 1.394 & 1.491 & 1.565 & 1.724\\
4 & 0.968 & 0.994 & 1.011 & 1.043 & 1.082 & 1.130 & 1.177 & 1.208 & 1.271\\
6 & 0.952 & 0.973 & 0.986 & 1.009 & 1.037 & 1.069 & 1.100 & 1.119 & 1.158\\
8 & 0.953 & 0.971 & 0.981 & 0.999 & 1.021 & 1.045 & 1.067 & 1.082 & 1.109\\
10 & 0.958 & 0.972 & 0.981 & 0.996 & 1.014 & 1.033 & 1.050 & 1.062 & 1.083\\
12 & 0.962 & 0.974 & 0.982 & 0.994 & 1.009 & 1.025 & 1.040 & 1.049 & 1.066\\
16 & 0.969 & 0.979 & 0.985 & 0.994 & 1.005 & 1.017 & 1.028 & 1.035 & 1.048\\
20 & 0.974 & 0.982 & 0.987 & 0.994 & 1.003 & 1.013 & 1.021 & 1.027 & 1.037\\
\bottomrule
\end{tabular}
\endgroup{}
\end{table}

\end{document}